\newcommand*{\teigi}[1]{\emph{#1}}
\newcommand*{\LyndonTree}{\mathit{LTree}}
\newcommand*{\derive}{\mathit{val}}
\newcommand*{\Expr}[1]{\ensuremath{\mathrm{expr}_{#1}}}
\newcommand*{\lfs}[2]{\ensuremath{\mathit{lfs}_{#1}(#2)}}
\newcommand*{\SLP}{\ensuremath{\mathcal{G}_{\mathrm{SLP}}}}
\newcommand*{\CFG}{\ensuremath{\mathcal{G}_{\mathrm{AG}}}}
\newcommand*{\LYN}{\ensuremath{\mathcal{G}_{\mathrm{LYN}}}}
\newcommand*{\LeftRule}[1][i]{\ensuremath{X_{{#1}_{\mathrm{L}}}}}
\newcommand*{\RightRule}[1][i]{\ensuremath{X_{{#1}_{\mathrm{R}}}}}
\newcommand*{\LeftPattern}{\ensuremath{P_{\mathrm{L}}}}
\newcommand*{\RightPattern}{\ensuremath{P_{\mathrm{R}}}}
\newcommand*{\RunningExample}{aababaababb}
\newcommand*{\Locate}{\ensuremath{\mathit{locate}}}
\newcommand*{\occ}[1][]{\mathrm{occ}}
\newcommand*{\ag}[1][]{\hat{g}}
\newcommand*{\myCite}[2]{\cite[#1]{#2}}
\newcommand*{\Cite}{\cite}
\newcommand*{\Tabref}{Table~\ref}
\newcommand*{\Figref}{Figure~\ref}
\newcommand*{\figref}{Figure~\ref}
\begin{document}%

\title{Grammar-compressed Self-index with \\ Lyndon Words}

\newcommand{\inst}[1]{\ensuremath{^{#1}}}
\author{
  Kazuya Tsuruta\inst{1} \and
  Dominik K\"{o}ppl\inst{1,2} \and
  Yuto Nakashima\inst{1} \and
  Shunsuke Inenaga\inst{1,3} \and
  Hideo Bannai\inst{1} \and
  Masayuki Takeda\inst{1} \and
  {
    \small
    \begin{minipage}{\linewidth}
      \begin{center}
        $^1$Department of Informatics, Kyushu University, Fukuoka, Japan,
        \\ \texttt{\{kazuya.tsuruta, yuto.nakashima, inenaga, bannai, takeda\}@inf.kyushu-u.ac.jp}
        \\
        $^2$Japan Society for Promotion of Science, \texttt{dominik.koeppl@inf.kyushu-u.ac.jp}
        \\
        $^3$Japan Science and Technology Agency
      \end{center}
    \end{minipage}
  }
}
\date{}

\maketitle

\begin{abstract}
  We introduce a new class of straight-line programs (SLPs),
named the \emph{Lyndon SLP},
inspired by the Lyndon trees~(Barcelo, 1990).
Based on this SLP,
we propose a self-index data structure of $\Oh{g}$ words of space
that can be built from a string~$T$ in $\Oh{n \lg n}$ expected time,
retrieving the starting positions of all occurrences of a pattern~$P$ of length~$m$ in $\Oh{m + \lg m \lg n + \occ \lg g}$ time,
where $n$ is the length of~$T$,
$g$ is the size of the Lyndon SLP for $T$,
and $\occ$ is the number of occurrences of $P$ in $T$.

\end{abstract}

\section{Introduction}

A context-free grammar is said to \emph{represent} a string $T$
if it generates the language consisting of $T$ and only $T$.
Grammar-based compression~\cite{DBLP:journals/tit/KiefferY00} is, given a string $T$,
to find a small size description of $T$ based on a context-free grammar that represents $T$.
The grammar-based compression scheme is known to be most suitable for compressing
\emph{highly-repetitive strings}.
Due to its ease of manipulation, grammar-based representation of strings is a frequently used model for \emph{compressed string processing},
where the aim is to efficiently process compressed strings without explicit decompression.
Such an approach allows for theoretical and practical speed-ups compared to a naive decompress-then-process approach.

A \emph{self-index} is a data structure that
is a full-text index, i.e., supports various pattern matching queries on the text,
and also provides random access to the text,
usually without explicitly holding the text itself.
Examples are
the compressed suffix array~\cite{DBLP:conf/stoc/GrossiV00,DBLP:conf/isaac/HonLSS03,DBLP:conf/cocoon/LamSSY02},
the compressed compact suffix array~\cite{DBLP:conf/cpm/MakinenN04}, and
the FM index~\cite{DBLP:conf/focs/FerraginaM00}.%
\footnote{Navarro and M{\"{a}}kinen~\cite{DBLP:journals/csur/NavarroM07} published an excellent survey on this topic.}
These self-indexes are, however, unable to fully exploit the redundancy of highly repetitive strings.
To exploit such repetitiveness, Claude and Navarro~\cite{claudear:_self_index_gramm_based_compr}
proposed the first self-index based on grammar-based compression.
The method is based on a \emph{straight-line program (SLP)},
a context-free grammar representing a single string in the Chomsky normal form.
Plenty of grammar-based self-indexes have already been proposed (e.g.,~\cite{DBLP:conf/spire/ClaudeN12a,DBLP:conf/wea/TakabatakeTS14,takabatake16siedm,NISHIMOTO2019}).

In this paper, we first introduce a new class of SLPs,
named the \emph{Lyndon SLP},
inspired by the Lyndon tree~\cite{barcelo90:_free_lie_algeb}.
We then propose a self-index structure of $\Oh{g}$ words of space
that can be built from a string~$T$ in $\Oh{n \lg n}$ expected time.
The proposed self-index can find the starting positions of all occurrences of a pattern~$P$ of length~$m$ in
$\Oh{m + \lg m \lg n + \occ \lg g}$ time,
where $n$ is the length of~$T$,
$g$ is the size of the Lyndon SLP for $T$,
$\sigma$ is the alphabet size, $w$ is the computer word length
and $\occ$ is the number of occurrences of $P$ in $T$.

\subsection{Related work}

The \emph{smallest grammar problem} is, given a string $T$,
to find the context-free grammar $G$ representing $T$ with the smallest possible size,
where the \emph{size} of $G$ is the total length
of the right-hand sides of the production rules in $G$.
Since the smallest grammar problem is NP-hard~\cite{Storer77},
many attempts have been made to develop small-sized context-free grammars representing a given string~$T$.
LZ78~\cite{LZ78},
LZW~\cite{LZW},
Sequitur~\cite{SEQUITUR},
Sequential~\cite{DBLP:journals/tit/KiefferY00},
LongestMatch~\cite{DBLP:journals/tit/KiefferY00},
Re-Pair~\cite{LarssonDCC99},
and
Bisection~\cite{Kieffer00MPM} are
grammars based on simple greedy heuristics.
Among them Re-Pair is known for achieving high compression ratios in practice.

Approximations for the smallest grammar have also been proposed.
The AVL grammars~\cite{rytter_tcs03} and
the $\alpha$-balanced grammars~\cite{charikar05:_small_gramm_probl}
can be computed in linear time and achieve
the currently best approximation ratio of $\Oh{\lg(|T|/g_T^*)}$ by using the LZ77 factorization and the balanced binary grammars,
where $g_T^*$ denotes the smallest grammar size for $T$.
Other grammars with linear-time algorithms achieving the approximation $\Oh{\lg(|T|/g_T^*)}$
are LevelwiseRePair~\cite{sakamoto05} and
Recompression%
~\cite{jez15recompression}.
They basically replace di-grams with a new variable in a bottom-up manner similar to Re-Pair, but use different mechanisms to select the di-grams.
On the other hand,
LCA~\cite{sakamoto04:_space_savin_linear_time_algor} and
its variants~\cite{sakamoto09:_space_savin_approx_algor_gramm_based_compr,maruyama12:_onlin_algor_light_gramm_based_compr,maruyama13:_fully_onlin_gramm_compr}
are known as scalable practical approximation algorithms.
The core idea of LCA is the \emph{edit-sensitive parsing (ESP)}~\cite{DBLP:journals/talg/CormodeM07}, a parsing algorithm developed for approximately computing the edit distance with moves.
The \emph{locally-consistent-parsing (LCP)}~\cite{sahinalp95:_data} is a generalization of ESP\@.
The \emph{signature encoding (SE)}~\cite{DBLP:journals/algorithmica/MehlhornSU97},
developed for equality testing on a dynamic set of strings,
is based on LCP and can be used as a grammar-transform method.
The ESP index~\cite{DBLP:conf/wea/TakabatakeTS14,takabatake16siedm} and
the SE index~\cite{NISHIMOTO2019}
are grammar-based self-indexes based on ESP and SE, respectively.

While our experimental section (\cref{secExperiments}) serves as a practical comparison between the sizes of some of the above mentioned grammars,
\Tabref{tabConstruction} gives a comparison with some theoretically appealing index data structures based on grammar compression.
There, we chose the indexes of Claude and Navarro~\Cite{DBLP:conf/spire/ClaudeN12a}, Gagie et al.~\Cite{DBLP:conf/lata/GagieGKNP12}, and Christiansen et al.~\Cite{christiansen18optimaltime}
because these indexes have non-trivial time complexities for answering queries.
We observe that our proposed index has the fastest construction among the chosen grammar indexes,
and is competitively small if $g = \oh{\gamma \lg (n/\gamma)}$ while being clearly faster than the first two approaches for long patterns.
It is worth pointing out that the grammar index of Christiansen et al.~\cite{christiansen18optimaltime} achieves a grammar size whose upper bound \Oh{\gamma \lg (n/\gamma)} matches the upper bound of the size~$g_T^*$ of the smallest possible grammar.
Unfortunately, we do not know how to compare our result within these terms in general.

\begin{table*}
    \caption{Complexity bounds of self-indexes based on grammar compression.}
    \label{tabConstruction}
    \begin{center}
        \begin{tabular}{l|*{2}{l}}
            \multicolumn{3}{c}{Construction space (in words) and time} \\
            \hline \hline
            \multicolumn{1}{c|}{Index} &
            \multicolumn{1}{c}{Space} &
            \multicolumn{1}{c}{Time} \\
            \hline
            \Cite{DBLP:conf/spire/ClaudeN12a}
            & ${\Oh{n}}$ & ${\Oh{n + \ag\lg\ag}}$ \\
            \Cite{christiansen18optimaltime}
            & ${\Oh{n}}$ & ${\Oh{n\lg n}} $ expected \\
            \Cite{christiansen18optimaltime}
            & ${\Oh{n}}$ & ${\Oh{n \lg n}} $ expected \\
            This paper
            & ${\Oh{n}}$ & ${\Oh{n \lg n}} $ expected\\
            \hline
        \end{tabular}
        \begin{tabular}{l|*{2}{l}}
            \multicolumn{3}{c}{Needed space (in words) and query time for a pattern of length~$m$} \\
            \hline \hline
            \multicolumn{1}{c|}{Index} &
            \multicolumn{1}{c}{Space} &
            \multicolumn{1}{c}{Locate Time} \\
            \hline
            \Cite{DBLP:conf/spire/ClaudeN12a}
            & ${\Oh{\ag}}$ & ${\Oh{m^2 \lg \lg_{\ag} n + (m + \occ)\lg\ag}}$ \\
            \Cite{DBLP:conf/lata/GagieGKNP12}
            & ${\Oh{\ag + z \lg \lg z}}$ & ${\Oh{m^2 + (m + \occ) \lg\lg n}}$ \\
            \Cite{christiansen18optimaltime}
            & ${\Oh{\gamma\lg(n/\gamma)}}$ & ${\Oh{m + \lg^\epsilon \gamma + \occ \lg^\epsilon(\gamma\lg(n/\gamma))}}$ \\
            \Cite{christiansen18optimaltime}
            & ${\Oh{\gamma\lg(n/\gamma)\lg^\epsilon(\gamma\lg(n/\gamma))}}$ & ${\Oh{m + \occ}}$ \\
            Theorem \ref{th:collision_free}
            & ${\Oh{g}}$ & ${\Oh{m + \lg m \lg n + \occ \lg g}}$ \\
            \hline
        \end{tabular}
    \end{center}
    {\small
        $n$ is the length of~$T$,
        $z$ is the number of LZ77~\cite{ziv77lz} phrases of~$T$,
        $\gamma$ is the size of the smallest string attractor~\cite{kempa18stringattractors} of~$T$,
        $g$ is the size of the Lyndon SLP of~$T$,
        $\ag$ is the size of a given admissible grammar,
        $\epsilon > 0$ is a constant,
        $m$ is the length of a pattern~$P$,
        and $\occ$ is the number of occurrences of $P$ in $T$.
    }
\end{table*}

\section{Preliminaries}
\label{sec:preliminaries}

\subsection{Notation}
Let $\Sigma$ be an ordered finite {\em alphabet}.
An element of $\Sigma^*$ is called a {\em string}.
The length of a string $S$ is denoted by $|S|$.
The empty string $\varepsilon$ is the string of length 0.
For a string $S = xyz$, $x$, $y$ and $z$ are called
a \emph{prefix}, \emph{substring}, and \emph{suffix} of $S$, respectively.
A prefix (resp.\ suffix) $x$ of $S$ is called a \emph{proper prefix} (resp.\ suffix)
of $S$ if $x \neq S$.
$S^\ell$ denotes the $\ell$ times concatenation of the string~$S$.
The $i$-th character of a string $S$ is denoted by $S[i]$, where $i \in [1..|S|]$.
For a string $S$ and two integers $i$ and $j$ with $1 \leq i \leq j \leq |S|$,
let $S[i..j]$ denote the substring of $S$ that begins at position $i$ and ends at
position $j$. For convenience, let $S[i..j] = \varepsilon$ when $i > j$.

\subsection{Lyndon words and Lyndon trees}

Let $\preceq$ denote some total order on $\Sigma$ that induces
the lexicographic order~$\preceq$ on $\Sigma^*$.
We write $u \prec v$ to imply $u\preceq v$ and $u\neq v$
for any $u,v\in\Sigma^*$.

\begin{definition}[Lyndon Word~\cite{lyndon54}]
  A non-empty string $w \in \Sigma^+$ is said to be a \emph{Lyndon word}
  with respect to $\prec$ if $w \prec u$ for every non-empty proper suffix $u$ of $w$.
\end{definition}
By this definition, all characters $(\in \Sigma^1)$ are Lyndon words.

\begin{definition}[Standard Factorization~\cite{ChenFL58:_lyndon_factorization_,Lothaire83}]
  The \emph{standard factorization} of a Lyndon word $w$ with $|w|\geq 2$
  is an ordered pair $(u, v)$ of strings $u,v$
  such that $w = uv$ and
  $v$ is the longest proper suffix of $w$ that is also a Lyndon word.
\end{definition}

\begin{lemma}[\Cite{BassinoCN05,Lothaire83}]
For a Lyndon word $w$ with $|w| > 1$,
the standard factorization $(u, v)$ of $w$ always exists,
and the strings $u$ and $v$ are Lyndon words.
\end{lemma}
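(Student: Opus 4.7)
The plan is to establish the three assertions in order: existence of the factorization, $v$ being a Lyndon word (trivial), and $u$ being a Lyndon word. For existence, I would note that the last character of $w$ is a Lyndon word (by the paper's remark that every single character is Lyndon) and, since $|w| \geq 2$, is a non-empty proper suffix of $w$. Hence the set of non-empty proper Lyndon suffixes of $w$ is non-empty; I would define $v$ to be its longest element and $u$ by $w = uv$, so that $u$ is non-empty. The existence claim and the Lyndon property of $v$ are then immediate by construction, leaving only the Lyndon property of $u$ to prove.

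As an auxiliary step I would first derive the inequality $u \prec v$, to be used later. Since $w$ is Lyndon and $v$ is a non-empty proper suffix, $uv = w \prec v$. Because $|uv| > |v|$, $uv$ cannot be a proper prefix of $v$, so the strict inequality must be witnessed by a first position $i \leq |v|$ at which $uv[i] < v[i]$ with prior agreement on positions $1, \ldots, i-1$. A brief case split on whether $i \leq |u|$ or $i > |u|$ then yields $u \prec v$ in either case: in the former, $u$ and $v$ agree on $1, \ldots, i-1$ and $u[i] < v[i]$; in the latter, $u$ is a proper prefix of $v$.

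The substantial step---and the one I expect to be the main obstacle---is showing $u$ itself is Lyndon. My plan is a contradiction argument relying on two classical auxiliary facts: (i) the lexicographically smallest suffix of a non-empty string is itself a Lyndon word, and (ii) if Lyndon words $a$ and $b$ satisfy $a \prec b$, then the concatenation $ab$ is also Lyndon. Assuming $u$ is not Lyndon, let $y$ be the lex-smallest suffix of $u$. Then $y \neq u$ (otherwise $u$ itself would be Lyndon by (i)), so $y$ is a non-empty proper suffix of $u$ with $y \prec u$. By (i), $y$ is Lyndon; combining $y \prec u$ with the auxiliary inequality $u \prec v$ gives $y \prec v$. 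Applying (ii) to $y$ and $v$ shows $yv$ is a Lyndon word. Since $y$ is a proper suffix of $u$, the string $yv$ is a non-empty proper suffix of $w = uv$ of length $|y| + |v| > |v|$, contradicting the maximality of $v$ among Lyndon proper suffixes of $w$. Hence $u$ is Lyndon. The main delicacy I anticipate is isolating the two classical auxiliary facts correctly and verifying that $yv$ is genuinely a proper suffix of $w$ under the contradiction hypothesis.
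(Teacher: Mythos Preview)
Your argument is correct. The paper does not give its own proof of this lemma; it simply cites the result from the literature (Bassino--Cl\'ement--Nicaud and Lothaire), so there is nothing to compare against. Your proof is in fact the standard one found in those references: existence via the last character, the auxiliary inequality $u \prec v$ from $w \prec v$, and the contradiction via the longer Lyndon suffix $yv$ built from the minimal suffix $y$ of $u$. Note also that your fact~(ii) is exactly \cref{lemLyndonConcat} in the paper, so your proof even dovetails with the tools the paper already imports.
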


The Lyndon tree of a Lyndon word $w$, defined below,
is the full binary tree induced by recursively applying the standard factorization on $w$.

\begin{definition}[Lyndon Tree~\cite{barcelo90:_free_lie_algeb}]
  The \teigi{Lyndon tree} of a Lyndon word $w$, denoted by $\LyndonTree(w)$, is
  an ordered full binary tree defined recursively as follows:
  \begin{itemize}
  \item if $|w| = 1$, then $\LyndonTree(w)$ consists of a single node labeled by $w$;
  \item if $|w| \geq 2$, then the root of $\LyndonTree(w)$, labeled by $w$,
    has the left child $\LyndonTree(u)$ and the right child $\LyndonTree(v)$,
    where $(u, v)$ is the standard factorization of $w$.
  \end{itemize}
\end{definition}
\Figref{fig:lyndonTree} shows an example of a Lyndon tree for the
Lyndon word $\mathtt{aababaababb}$.

\begin{figure}[h]
  \centerline{
    \includegraphics[width=0.98\textwidth]{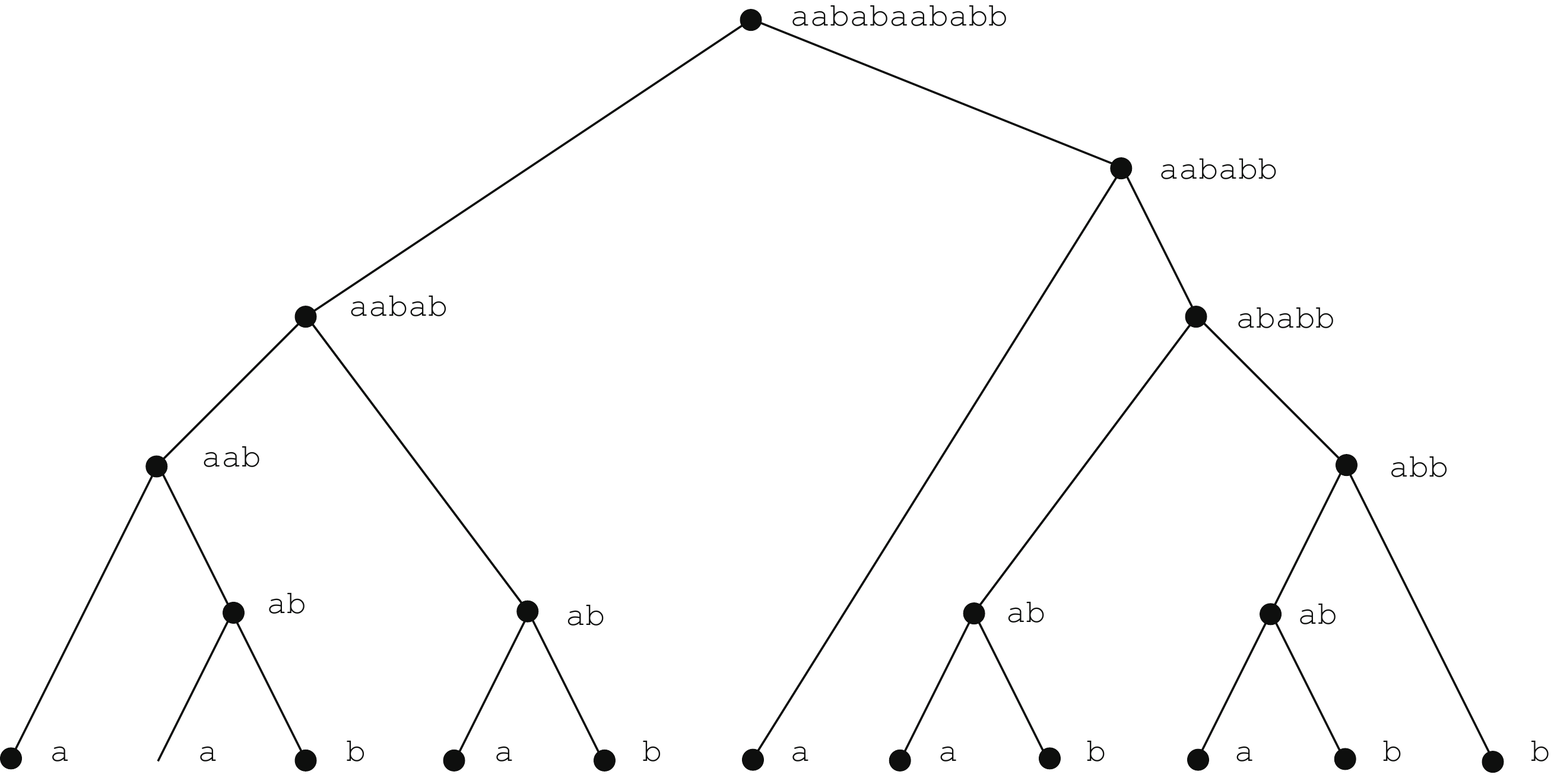}
  }
  \caption{
    The Lyndon tree for the Lyndon word $\mathtt{aababaababb}$ with respect to the
    order $\mathtt{a} \prec \mathtt{b}$,
    where each node is accompanied by its derived string to its right.
  }
  \label{fig:lyndonTree}
\end{figure}

\subsection{Admissible grammars and straight-line programs (SLPs)}
An admissible grammar~\cite{DBLP:journals/tit/KiefferY00} is a context-free grammar that generates a language consisting only of a single string.
Formally, an \teigi{admissible grammar (AG)} is a set of production rules $\CFG = \{ X_i \rightarrow \Expr{i} \}_{i = 1}^{g}$,
where $X_i$ is a \teigi{variable} and $\Expr{i}$ is a non-empty string over $\Sigma\cup\{X_1,\ldots,X_{i-1}\}$, called an \teigi{expression}.
The variable $X_g$ is called the \emph{start symbol}.
We denote by $\derive(X_i)$ the string derived by $X_i$.
We say that an admissible grammar \CFG{} \emph{represents} a string $T$ if $T = \derive(X_g)$.
To ease notation, we sometimes associate $\derive(X_i)$ with $X_i$.
The \teigi{size} of \CFG{} is the total length of all expressions~$\Expr{i}$.
We assume that any admissible grammar has no useless symbols.

It should be stated that
the above definition of admissible grammar is different with but equivalent to
the original definition in \Cite{DBLP:journals/tit/KiefferY00},
which defines an admissible grammar to be
a context-free grammar $G$ satisfying the conditions:
(1) $G$ is deterministic, i.e., for every variable $A$
there is exactly one production rule of the form $A \to \gamma$,
where $\gamma$ is a non-empty string consisting of variables and characters;
(2) $G$ has no production rule of the form $A \to \varepsilon$;
(3) The language $L(G)$ of $G$ is not empty; and
(4) $G$ has no useless symbols, i.e., every symbol appears in some derivation
that begins with the start symbol and
ends with a string consisting only of characters.

A \teigi{straight-line program} (\teigi{SLP}) is an admissible grammar
in the Chomsky normal form, namely,
each production rule is either of the form
$X_i \rightarrow a$ for some $a\in \Sigma$ or
$X_i \rightarrow \LeftRule \RightRule$ with $i > i_{\mathrm{L}}, i_{\mathrm{R}}$.
Note that $\SLP{}$ can derive a string up to length~$\Oh{2^g}$.
This can be seen by the example string $T = \mathtt{a} \cdots \mathtt{a}$ consisting of $n = 2^\ell$ \texttt{a}'s,
where the smallest SLP
$\{X_1 \rightarrow \mathtt{a} \} \cup \bigcup_{j=2}^{\ell + 1} \{ X_j \rightarrow X_{j-1} X_{j-1} \}$ has size $2\ell + 1$.

The derivation tree $\mathcal{T}_{\SLP{}}$ of \SLP{} is a labeled ordered binary tree,
where each internal node is labeled with a variable in $\{X_1, \ldots, X_g\}$,
and each leaf is labeled with a character in $\Sigma$.
The root node has the start symbol $X_g$ as label.
An example of the derivation tree of an SLP is shown in \figref{fig:LyndonSLP}.

\subsection{Grammar irreducibility}\label{secGrammarIrreducibility}
An admissible grammar is said to be \emph{irreducible} if it satisfies
the following conditions:
\begin{enumerate}[leftmargin=6ex, label=C-\arabic*.]
\item Every variable other than the start symbol is used more than once (\textbf{rule utility});
\item All pairs of symbols have at most one non-overlapping occurrence
  in the right-hand sides of the production rules (\textbf{di-gram uniqueness}); and
\item Distinct variables derive different strings.
\end{enumerate}
Grammar-based compression is a combination of
\begin{enumerate}
	\item the \emph{grammar transform}, i.e., the computation of 
	an admissible grammar $G$ representing the input string $T$, and
	\item the \emph{grammar encoding}, i.e., an encoding for $G$.
\end{enumerate}
Kieffer and Yang~\cite{DBLP:journals/tit/KiefferY00} showed that
a combination of an irreducible grammar transform
and a zero order arithmetic code is universal,
where a grammar transform is said to be \emph{irreducible} if
the resulting grammars are irreducible.

If an admissible grammar $G$ is not irreducible,
we can apply at least one of the following reduction rules~\cite{DBLP:journals/tit/KiefferY00} to make~$G$ irreducible:

\begin{enumerate}[leftmargin=6ex, label=R-\arabic*.]
\item 
	Replace each variable $X_i$ occurring only once in the right-hand sides of the production rules
	with $\Expr{i}$ and remove the rule $X_i\rightarrow \Expr{i}$.
  We also remove all production rules with useless symbols.
\item
	Given there are at least two non-overlapping occurrences of a string~$\gamma$ of symbols
  with $|\gamma|\ge 2$ in the right-hand sides of the production rules,
  replace each of the occurrences of $\gamma$ with the variable $X_i$, where
  $X_i \rightarrow \gamma$ is an existing or newly created production rule.
  Recurse until no such $\gamma$ longer exists.
\item
  For each two distinct variables $X_i$ and $X_j$ deriving an identical string,
  (a) replace all occurrences of $X_j$ with $X_i$ in the right-hand sides of the production rules,
  and (b) remove the production rule $X_j \rightarrow \Expr{j}$ and discard the variable~$X_j$.
  Consequently, there are no two distinct variables $X_i$ and $X_j$ with $\derive(X_i) = \derive(X_j)$.
  This operation possibly makes some variables useless;
  the production rules with such variables will be removed by R-1.
\end{enumerate}

\section{Lyndon SLP}

In what follows, we propose a new SLP, called Lyndon SLP\@.
A \teigi{Lyndon SLP} is an SLP $\LYN = \{ X_i \rightarrow \Expr{i} \}_{i = 1}^{g}$ representing a Lyndon word,
and satisfies the following properties:
  \begin{itemize}
  \item The strings $\derive(X_i)$ are Lyndon words for all variables $X_i$.
  \item The standard factorization of the string $\derive(X_i)$
    is $(\derive(\LeftRule),\derive(\RightRule))$
    for every rule $X_i \rightarrow \LeftRule \RightRule$.
  \item No pair of distinct variables $X_i$ and $X_j$
    satisfies $\derive(X_i) = \derive(X_j)$.
  \end{itemize}
  The derivation tree (when excluding its leaves) of $\mathcal{T}_{\LYN}$ is isomorphic to the Lyndon tree of~$T$ (cf.~\figref{fig:LyndonSLP}).

\begin{figure}
	\adjustbox{valign=c}{%
  \includegraphics[width=.7\textwidth]{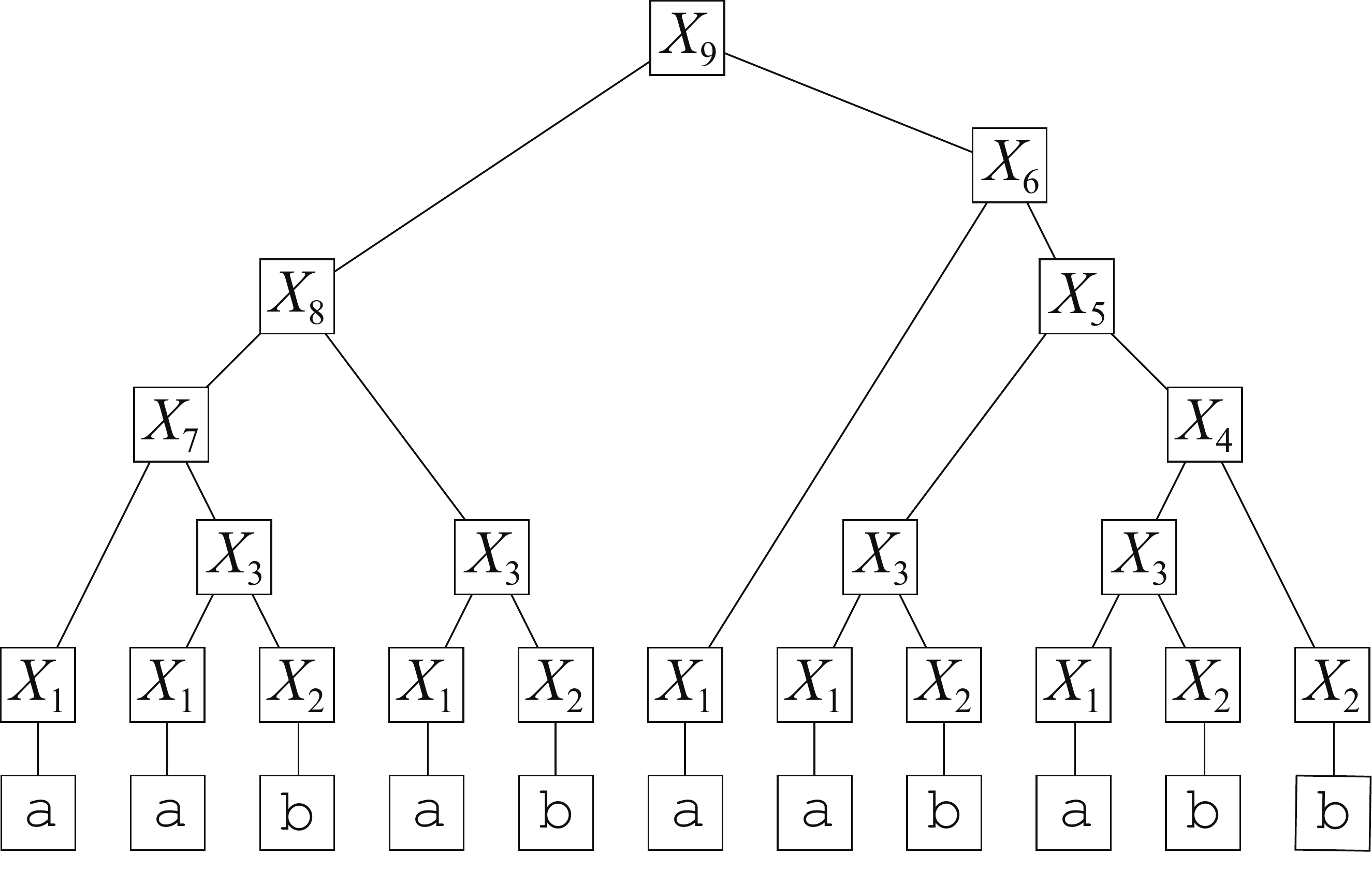}
	}%
	\hfill
	\adjustbox{valign=c}{%
		\(
  		\begin{array}{lll}
			X_1 & \rightarrow & \mathtt{a}\\
			X_2 & \rightarrow & \mathtt{b}\\
			X_3 & \rightarrow & X_1 X_2\\
			X_4 & \rightarrow & X_3 X_2\\
			X_5 & \rightarrow & X_3 X_4\\
			X_6 & \rightarrow & X_1 X_5\\
			X_7 & \rightarrow & X_1 X_3\\
			X_8 & \rightarrow & X_7 X_3\\
			X_9 & \rightarrow & X_8 X_6\\
  		\end{array}
		\)
	}%

  \caption{%
	  \emph{Left}: The derivation tree of the Lyndon SLP \LYN{} with $g = 9$
    representing the Lyndon word $T = \texttt{\RunningExample}$.
	\emph{Right}: The production rules of \LYN{}.
}
\label{fig:LyndonSLP}
\end{figure}

The rest of this article is devoted to algorithmic aspects regarding the Lyndon SLP\@.
We study its construction (\cref{secConstruction}), practically evaluate its size (\cref{secExperiments}), and propose an index data structure on it (\cref{secSelfIndex}).
For that, we work in the word RAM model supporting packing characters of sufficiently small bit widths into a single machine word.
Let~$w$ denote the machine word size in bits.

We fix a text $T[1..n]$ over an integer alphabet $\Sigma$ with size~$\sigma = n^{\Oh{1}}$.
If $T$ is not a Lyndon word, we prepend $T$ with a character smaller than all other characters appearing in~$T$.
Let $g$ denote the size $|\LYN{}|$ of the Lyndon SLP $\LYN{}$ of~$T$.

\begin{lemma}[{\myCite{Algo.~1}{bannai17runs}}]
We can construct the Lyndon tree of $T$ in \Oh{n} time.
\end{lemma}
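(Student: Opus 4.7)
My plan is to construct the Lyndon tree by a single left-to-right scan over $T$, maintaining a stack of partial Lyndon subtrees that represent the right spine of the tree being built. The scheme mirrors Duval's classical Lyndon factorization algorithm, lifted so that it produces the tree structure rather than only the factor boundaries.

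More concretely, after reading $T[1..i]$ the stack holds trees $\mathcal{S}_1, \ldots, \mathcal{S}_k$ whose roots derive $w_1, \ldots, w_k$, where $w_1 w_2 \cdots w_k$ is the unique Chen--Fox--Lyndon factorization of $T[1..i]$ with $w_1 \succeq w_2 \succeq \cdots \succeq w_k$. To process $T[i+1]$ I push a fresh leaf as a new tree $\mathcal{S}_{k+1}$ representing the singleton Lyndon word, and then repeatedly merge the top two entries while $w_k \prec w_{k+1}$. Since the concatenation of two Lyndon words $u \prec v$ is itself a Lyndon word with standard factorization $(u, v)$, each merge correctly creates an internal node with left child $\mathcal{S}_k$ and right child $\mathcal{S}_{k+1}$, and the resulting root derives $w_k w_{k+1}$, restoring the invariant. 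Because $T$ itself is a Lyndon word, after processing all $n$ characters the stack contains a single tree, which is exactly $\LyndonTree(T)$.

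For the running time, each merge strictly decreases the stack size and each character is pushed exactly once, so the total number of push and pop operations is $O(n)$. The main obstacle is executing each comparison $w_k \prec w_{k+1}$ in amortized $O(1)$ time, since a naive character-by-character comparison of two arbitrarily long Lyndon words could cost $\Omega(n^2)$ overall. The resolution, following Duval's analysis, is to exploit the structural invariant that inside the merging loop the comparison reduces to comparing the newly pushed character $T[i+1]$ against one specific character of $T$ at a position stored with the current stack entry; this single-character comparison can then be charged either to the incoming character or to the entry being popped, yielding the claimed $O(n)$ total time.
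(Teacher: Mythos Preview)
Your approach contains a genuine error. The assertion that ``the concatenation of two Lyndon words $u \prec v$ is itself a Lyndon word with standard factorization $(u,v)$'' is false: take $u=\texttt{aab}$ and $v=\texttt{b}$; then $uv=\texttt{aabb}$ is Lyndon, but its longest Lyndon proper suffix is $\texttt{abb}$, so the standard factorization is $(\texttt{a},\texttt{abb})$, not $(\texttt{aab},\texttt{b})$. Running your left-to-right procedure on $T=\texttt{aabb}$ reproduces exactly this wrong split: after consuming $\texttt{aab}$ your stack holds a single subtree rooted at $\texttt{aab}$, and the final merge with $\texttt{b}$ produces a root with children $\texttt{aab}$ and $\texttt{b}$, whereas $\LyndonTree(\texttt{aabb})$ has children $\texttt{a}$ and $\texttt{abb}$. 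Your stack does track the Lyndon factorization of each prefix $T[1..i]$ correctly as a sequence of \emph{strings} (that much is Duval's algorithm), but the binary trees you attach along the way are not subtrees of $\LyndonTree(T)$ in general; the subtree for $\texttt{aab}$ does not occur anywhere in $\LyndonTree(\texttt{aabb})$.

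The paper does not give its own proof of this lemma; it cites Algorithm~1 of Bannai et al.\ and explicitly remarks in \cref{secConstruction} that that algorithm runs online \emph{from right to left}. The direction is not cosmetic. Processing suffixes $T[i..n]$ and prepending $T[i-1]$, every merge has as its right operand a factor already known to be the longest Lyndon word starting at its text position, which is exactly the information needed to certify that it is the longest Lyndon proper suffix of the merged word and hence that the merge realises the standard factorization. A left-to-right scan has no access to this suffix-side information.
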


\subsection{Constructing Lyndon SLPs}\label{secConstruction}
The algorithm of Bannai et al.~\cite[Algo.~1]{bannai17runs} builds the Lyndon tree \emph{online} from right to left.
We can modify this algorithm to create the Lyndon SLP of~$T$ by storing a dictionary for the rules and a reverse dictionary for looking up rules:
Whenever the algorithm creates a new node~$u$,
we query the reverse dictionary with $u$'s two children~$v$ and~$w$ for an existing rule~$X \rightarrow X_v X_w$, where $X_v$ and $X_w$ are the variables representing $v$ and $w$.
If such a rule exists, we assign $u$ the variable $X$, otherwise we create a new rule $X_u \rightarrow X_v X_w$ and put this new rule into both dictionaries.
The dictionaries can be implemented as balanced search trees or hash tables, featuring $\Oh{n \lg g}$ deterministic construction time or $\Oh{n}$ expected construction time, respectively.

In the static setting (i.e., we do not work online),
deterministic $\Oh{n}$ time can be achieved with the enhanced suffix array~\cite{manber93sa,abouelhoda04enhanced} supporting constant time longest common extension queries.
We associate each node~$v$ of the Lyndon tree with the pair $(|T[i..j]|, \mathit{rank}(i))$,
where $T[i..j]$ is the substring derived from the non-terminal representing $v$, and 
$\mathit{rank}(i)$ is the lexicographic rank of the suffix starting at position $i$.
Then, sort all nodes according to their associated pairs with a linear-time integer sorting algorithm.
By using longest common extension queries between adjacent nodes of equal length
in the sorted order,
we can determine in $\Oh{1}$ time per node whether they represent the same string,
and if so, assign the same variable (otherwise assign a new variable).

\subsection{Lyndon array simulation}\label{secLyndonArray}
As a by-product, we can equip the Lyndon SLP of~$T$ with the indexing data structure of Bille et al.~\cite{bille15randomaccess} to support character extraction and navigation in \Oh{\lg n} time.
This allows us to compute the $i$-th entry of the Lyndon array~\cite{bannai17runs} in \Oh{\lg n} time,
where the $i$-th entry of the Lyndon array of~$T$ stores the length of the longest Lyndon word starting at $T[i]$.
For that, given a text position~$i$, we search for the highest Lyndon tree node having $T[i]$ as its leftmost leaf.
Given the rightmost leaf of this node represents~$T[j]$, the longest Lyndon word starting at $T[i]$ has the length $j-i+1$.
(Otherwise, there would be a higher node in the Lyndon tree representing a longer Lyndon word starting at~$T[i]$.)

\begin{lemma}
	There is a data structure of size \Oh{g} that can retrieve the longest Lyndon word starting at $T[i]$ in \Oh{\lg n} time.
\end{lemma}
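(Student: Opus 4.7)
The plan is to combine the Lyndon SLP with the random-access data structure of Bille et al.\ and to formalize the sketch given in the paragraph preceding the lemma.

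First I would establish the underlying characterization: the length of the longest Lyndon word starting at $T[i]$ equals the length of the string derived by the highest Lyndon tree node whose leftmost leaf is $T[i]$. The $\geq$ direction is immediate because every Lyndon tree node derives a Lyndon word by definition. The $\leq$ direction is exactly the property of the Lyndon tree used by Bannai et al.\ to characterize the Lyndon array, so I would appeal to their result rather than reprove it.

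Next I would describe the $\Oh{\lg n}$-time search. Let $v_0, v_1, \ldots, v_k$ be the root-to-leaf path in $\mathcal{T}_{\LYN}$ ending at the leaf that corresponds to $T[i]$. A node $v_j$ on this path has $T[i]$ as its leftmost leaf if and only if $v_{j+1}, \ldots, v_k$ are all left children of their parents; equivalently, $v_j$ is the endpoint of the \emph{last right turn} made during the descent (or the root itself if no right turn occurs). Since the derivation tree of $\LYN{}$ (excluding leaves) is isomorphic to the Lyndon tree of~$T$, we can perform this descent via the heavy-path-based random-access machinery of Bille et al., whose $\Oh{g}$-space structure reaches the leaf for position $i$ in $\Oh{\lg n}$ time while traversing $\Oh{\lg n}$ heavy paths. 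Storing with every variable the length of the string it derives lets us return that length for the identified node and finish in $\Oh{\lg n}$ time overall.

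The main technical obstacle is tracking the last right turn within the heavy-path descent, because a single heavy path can alternate freely between left-child and right-child edges on its way down. I plan to handle this by annotating each heavy path with, for every vertex, the deepest right-child edge at or above that vertex on the path; this annotation adds only $\Oh{g}$ space in total and allows each heavy-path exit to contribute its candidate ``last right turn'' in $\Oh{1}$ time, so the maximum over the $\Oh{\lg n}$ visited heavy paths yields the sought node $v_j$ within the claimed time bound.
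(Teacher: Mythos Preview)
Your proposal is correct and follows the same approach as the paper, which in fact provides no proof beyond the sketch in the paragraph preceding the lemma. Your explicit treatment of the ``last right turn'' via per-heavy-path annotations is a reasonable way to flesh out what the paper leaves implicit in its appeal to Bille et al.'s navigation machinery.
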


\subsection{Computational experiments}\label{secExperiments}

We empirically benchmark the grammar sizes obtained by the Lyndon SLP to highlight its potential as a grammar compressor.
As benchmark datasets we used four highly repetitive texts consisting of
the files
\texttt{cere},
\texttt{einstein.de.txt},
\texttt{kernel}, and
\texttt{world\_leaders}
from the Pizza \& Chili corpus
(\url{http://pizzachili.dcc.uchile.cl}).
We used the natural order implied by the ASCII code for building the Lyndon SLPs.
We compared the size of the resulting Lyndon grammars
with the resulting grammars of Re-Pair, LCA, Recompression.
We used existing implementations of Re-Pair
(\url{https://users.dcc.uchile.cl/~gnavarro/software/}) and of
LCA (\url{http://code.google.com/p/lcacomp/}).
The outputs of LCA, Recompression and our method are SLPs,
while those of Re-Pair are AGs (and not necessarily SLPs).
For a fair comparison, we compared the resulting grammar sizes either in an SLP representation, or in a common AG representation.

\begin{description}
\item[SLP]
We keep the resulting grammar of the Lyndon SLP, LCA, and Recompression as it is, but transform the output of Re-Pair to an SLP\@.
To this end, we observe that Re-Pair consists of (a) a list of non-terminals whose right hand sides are already of length two, and (b) a start symbol
whose right hand side is a string of symbols of arbitrary size. 
Consequently, to transform this grammar to an SLP, it is left to focus on the start symbol: 
We replace greedily di-grams in the right hand side of the start symbol until it consists only of two symbols.
\item[AG]
We process each grammar in the following way:
	First, we remove the production rules of the form $X_i\rightarrow a\in\Sigma$ by replacing all occurrences of $X_i$ with $a$.
	Subsequently, we apply the reduction rule~R-1 of \cref{secGrammarIrreducibility}.
\end{description}

We collected the obtained grammar sizes in \Tabref{mergedtable2}.
There, we observe that the Lyndon SLP is no match for Re-Pair, but
competitive with LCA and Recompression.
Although this evaluation puts Re-Pair in a good light, it seems hard to build 
an index data structure on this grammar that can be as efficient as 
the self-index data structure based on the Lyndon SLP, which we present in the next section.

\begin{table*}
  \small\centering
  \caption{Sizes of the resulting grammars benchmarked in \cref{secExperiments}.}
    \label{mergedtable2}
    
\begin{tabular}{l|c*{4}{r}}
    \hline \hline
    \multicolumn{1}{c|}{collection} &
    \multicolumn{1}{c}{} &
    \multicolumn{1}{c}{Re-Pair} &
    \multicolumn{1}{c}{LCA} &
    \multicolumn{1}{c}{Recompression} &
    \multicolumn{1}{c}{Lyndon SLP} \\
    \hline
    \texttt{cere}
    & SLP & 6,433,183 & 9,931,777 & 8,537,747 & 13,026,562 \\
    & AG  & 4,057,693 & 6,513,345 & 5,309,789 & 7,469,979 \\
    \hline
    \texttt{einstein.de.txt}
    & SLP & 125,343 & 251,411 & 202,749 & 205,348 \\
    & AG  & 84,493 & 168,193 & 127,790 & 123,963 \\
    \hline
    \texttt{kernel}
    & SLP & 2,254,840 & 4,065,522 & 3,587,382 & 4,201,895 \\
    & AG  & 1,373,244 & 2,507,291 & 2,135,779 & 2,400,211 \\
    \hline
    \texttt{world\_leaders}
    & SLP & 601,757 & 1,243,757 & 1,023,739 & 911,222 \\
    & AG  & 398,234 & 809,163 & 636,700 & 552,497 \\
    \hline
\end{tabular}

\end{table*}

\section{Lyndon SLP based self-index}\label{secSelfIndex}

Given a Lyndon SLP of size $g$, we can build an indexing data structure on it
to query all occurrences of a pattern~$P$ of length~$m \in [1..n]$ in~$T$.
We call this query~$\Locate(P)$.
Our data structure is based on the approach of~\Cite{claudear:_self_index_gramm_based_compr}.
This approach separates the occurrences of a pattern into so-called primary occurrences and secondary occurrences.
It first locates the primary occurrences and, with the help of these, it subsequently locates the secondary occurrences.
To this end, it locates primary occurrences with a labeled binary relation data structure, and subsequently locates the secondary occurrences with the grammar tree.
In our case, we find the primary occurrences with so-called partition pairs.

\begin{figure}
	\begin{center}
		\includegraphics[width=0.4\textwidth]{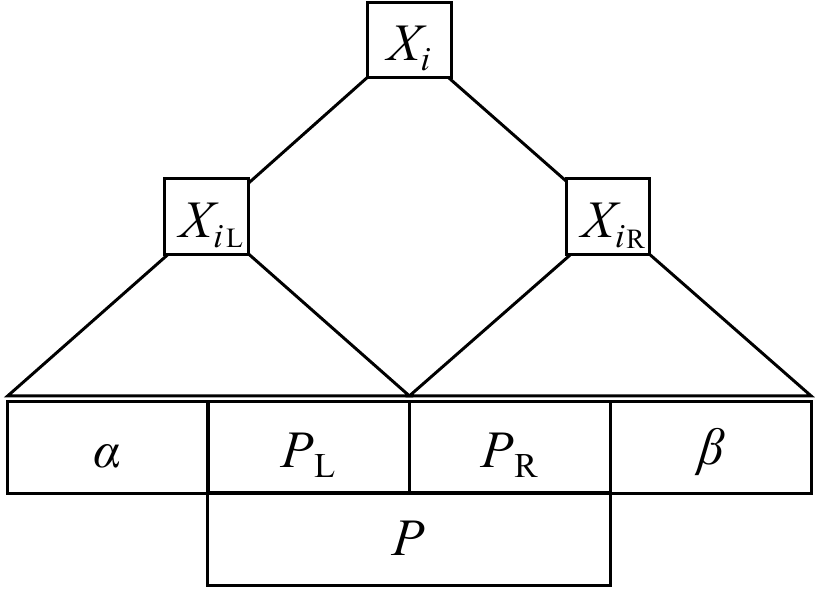}
	\end{center}
	\caption{%
		A partition pair~$(\LeftPattern, \RightPattern)$ of a pattern~$P$ with one of its associated tuples~$(X_i, \alpha, \beta)$.
	}
	\label{fig:PartitionPair}
\end{figure}

A \teigi{partition pair} (at position~$i$) of a pattern~$P[1..m]$ is a pair $(P[1..i],P[i+1..m])$ with $i \in [1..m]$ such that there exists
a rule $X_i \rightarrow \LeftRule \RightRule$ with $\derive(\LeftRule)$ and $\derive(\RightRule)$ having
$P[1..i]$ and $P[i+1..m]$ as a (not necessarily proper) suffix and as a prefix, respectively.
Similar to the grammar proposed in \myCite{Sect.~6.1}{christiansen18optimaltime}, we can bound the number of partition pairs by \Oh{\lg m}
by carefully selecting all possible partition pairs:

Given a partition pair~$(\LeftPattern, \RightPattern)$ of $P$,
let $X_i \rightarrow \LeftRule \RightRule$ be a rule such that
$\derive(\LeftRule)$ and $\derive(\RightRule)$ have $\LeftPattern$ and $\RightPattern$ as a suffix and as a prefix, respectively.
Consequently, there exist two strings~$\alpha$ and~$\beta$ such that
$\derive(\LeftRule) = \alpha \LeftPattern$ and $\derive(\RightRule) = \RightPattern \beta$ (cf.~\figref{fig:PartitionPair}).
By the definition of the Lyndon tree of the text~$T$,
$(\derive(\LeftRule), \derive(\RightRule)) = (\alpha \LeftPattern, \RightPattern \beta)$ is the standard factorization of
$\derive(X_i) = \alpha \LeftPattern \RightPattern \beta$.
According to the standard factorization, $\RightPattern \beta$ is the longest suffix of $\derive(X_i)$ that is a Lyndon word.
For the proofs of \cref{lemLyndonWordRuns,lemSignificantSuffix}, we use this notation and call the tuple
$(X_i, \alpha, \beta)$ a \teigi{tuple associated with} $(\LeftPattern, \RightPattern)$.

Let us take $P := \texttt{bab}$ as an example.
The only partition pair is $(\texttt{b}, \texttt{ab})$.
Considering the Lyndon grammar of our example text given in \figref{fig:LyndonSLP}, the tuples associated with $(\texttt{b}, \texttt{ab})$ are
$(X_8, \texttt{aa}, \varepsilon)$ and $(X_5, \texttt{a}, \texttt{b})$.

Note that $|\alpha| = 0$ if $P$ is a Lyndon word. If $P$ is a proper prefix of a Lyndon word\footnote{I.e., there is a string~$S \in \Sigma^+$ such that $PS$ is a Lyndon word.}, then $\alpha$ may be empty.
If $P$ is a not a (not necessarily proper) prefix of a Lyndon word, then $|\alpha| > 0$ (since $\alpha\LeftPattern\RightPattern\beta$ is a Lyndon word).

\subsection{Associated tuples with non-empty $\alpha$}
We want to reduce the number of possible partition pairs from $m$ to \Oh{\lg m}.
A first idea is that only the beginning positions of the Lyndon factors of~$P$ contribute to potentially partition pairs.
We prove this in \cref{lemPriOccLyndonWord}, after defining the Lyndon factors:

The (composed) \teigi{Lyndon factorization}~\cite{ChenFL58:_lyndon_factorization_} of a string~$P\in\Sigma^+$
is the factorization of $P$ into a sequence~$P_1^{\tau_1} \cdots P_p^{\tau_p}$ of lexicographically decreasing Lyndon words~$P_1, \ldots, P_p$,
where (a) each $P_x\in\Sigma^+$ is a Lyndon word, and (b) $P_x \succ P_{x+1}$ for each $x \in [1..p)$.
$P_x$ and $P_x^{\tau_x}$ are called \teigi{Lyndon factor} and \teigi{composed Lyndon factor}, respectively.

\begin{lemma}[{\myCite{Algo.\ 2.1}{duval83lyndon}}]\label{lem:LyndonFactorizationLinearTime}
	The Lyndon-factorization of a string can be computed in linear time.
\end{lemma}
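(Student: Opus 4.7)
The plan is to invoke Duval's algorithm, which computes the Lyndon factorization of a string $S$ of length $n$ in one left-to-right scan using three indices $i \le j < k$. The index $i$ marks the start of the next Lyndon factor to emit, $k$ is the current scanning position, and the current candidate Lyndon factor has length $p := k - j$; the invariant maintained is that $S[i..i+p-1]$ is a Lyndon word and $S[i..k-1]$ equals $S[i..i+p-1]$ repeated some integer number of times, possibly followed by one of its proper prefixes.

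The inner loop compares $S[j]$ to $S[k]$ and branches on three cases. If $S[j] = S[k]$, advance both $j$ and $k$ so that the periodic structure extends. If $S[j] < S[k]$, the candidate can be absorbed into a longer Lyndon word, so reset $j := i$ and advance $k$ by one. If $S[j] > S[k]$, the candidate has maximally extended: emit $S[i..i+p-1]$ with multiplicity $\lfloor (k-i)/p \rfloor$ as a composed Lyndon factor, advance $i$ by $p \cdot \lfloor (k-i)/p \rfloor$, and restart the inner loop.

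For correctness I would prove the above invariant by induction on $k$, using the characterization that a non-empty string is a Lyndon word iff it is strictly smaller than each of its proper non-empty suffixes (so the three cases in the inner loop preserve the invariant). Combined with the uniqueness of the Lyndon factorization and the fact that $P_1$ is the longest Lyndon-word prefix of $S$, the induction shows that the emitted sequence is exactly the Lyndon factorization $P_1^{\tau_1}\cdots P_p^{\tau_p}$ in order.

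The hard part is the linear-time analysis, since na\"ively the resets of $j$ could look expensive. I would use the potential $\Phi := 2k - j - i$, which is always non-negative and bounded by $2n$. Each single comparison either advances $k$ (increasing $\Phi$ by at most $2$) or decreases $j - i$ by a non-negative amount; emissions increase $i$ and so reduce $\Phi$ by at most the emitted length, which is amortized against the $O(p)$ output work per emission. Summing, the total number of comparisons and the total emission work are both $O(n)$, establishing the claimed linear running time.
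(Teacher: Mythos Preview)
The paper gives no proof at all for this lemma; it simply cites Duval's algorithm. Your proposal therefore goes well beyond what the paper does, and your description of the algorithm (the three indices $i\le j<k$, the invariant on $S[i..k-1]$, and the three-way branch on $S[j]$ versus $S[k]$) is an accurate account of Duval's method and of the correctness argument.

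There is, however, a genuine slip in your running-time analysis. In the branch $S[j]<S[k]$ you reset $j:=i$ and increment $k$, so your potential $\Phi=2k-j-i$ changes by $2+(j_{\mathrm{old}}-i)$, which can be arbitrarily large, not ``at most~$2$'' as you claim. Thus $\Phi$ does not give a per-step $O(1)$ amortized bound in the way you describe. The standard linear-time argument avoids $j$ altogether: each comparison increments $k$ by exactly one, and on an emission $k$ is reset to $i'+1$ with $i'\le k_{\mathrm{old}}$; writing $k_{\mathrm{old}}-i=pq+r$ with $0\le r<p$, one has $i'-i=pq\ge p>r=k_{\mathrm{old}}-i'$, so the total backward movement of $k$ over all resets is strictly less than the total forward movement of $i$, which is at most $n$. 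Hence the total number of comparisons is below $2n$. Replacing your potential paragraph with this counting argument fixes the gap.
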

We borrow from \myCite{Sect.\ 2.2}{i16faster} the notation $\lfs{P}{x} := P_x^{\tau_x} \cdots P_p^{\tau_p}$ for the suffix of $P$ starting with the $x$-th Lyndon factor.
Given $\lambda_P \in [1..p]$ is the smallest integer such that $\lfs{P}{x+1}$ is a prefix of $P_x$ for every $x \in [\lambda_P..p-1]$,
$\lfs{P}{x}$ is called a \teigi{significant suffix} of~$P$ for every $x \in [\lambda_P..p]$.
Consequently, $\lfs{P}{p} = P_p^{\tau_p}$ is a significant suffix.

In what follows, we show that $\RightPattern$ of a partition pair~$(\LeftPattern,\RightPattern)$ has to start with
a Lyndon factor (\cref{lemPriOccLyndonWord}), and further has to start with a composed Lyndon factor (\cref{lemLyndonWordRuns}).
Finally, we refine this result by restricting $\RightPattern$ to begin with a significant suffix (\cref{lemSignificantSuffix})
whose number is bounded by the following lemma:

\begin{lemma}[{\myCite{Lemma 12}{i16faster}}]\label{lem:LambdaUpperBound}
	The number of significant suffixes of $P$ is \Oh{\lg m}.
\end{lemma}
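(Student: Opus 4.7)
The plan is to bound the number of significant suffixes by showing that their lengths form a geometrically increasing sequence, so that at most $\Oh{\lg m}$ of them can fit within the length bound $m = |P|$.

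First I would unfold the definitions directly. For each $x \in [\lambda_P..p-1]$, the significant suffix $\lfs{P}{x}$ decomposes as $\lfs{P}{x} = P_x^{\tau_x} \lfs{P}{x+1}$ by the definition of the notation. This yields the length identity $|\lfs{P}{x}| = \tau_x |P_x| + |\lfs{P}{x+1}|$ with $\tau_x \geq 1$. Next I would invoke the defining property of $\lambda_P$: for every $x$ in $[\lambda_P..p-1]$, the suffix $\lfs{P}{x+1}$ is a prefix of the Lyndon factor $P_x$, so in particular $|\lfs{P}{x+1}| \leq |P_x|$.

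Combining these two observations gives $|\lfs{P}{x}| \geq |P_x| + |\lfs{P}{x+1}| \geq 2\,|\lfs{P}{x+1}|$ for every $x \in [\lambda_P..p-1]$. Iterating this doubling from $x = p$ down to $x = \lambda_P$ yields $|\lfs{P}{\lambda_P}| \geq 2^{\,p - \lambda_P} |\lfs{P}{p}| \geq 2^{\,p - \lambda_P}$, because $\lfs{P}{p} = P_p^{\tau_p}$ has length at least one. Since every significant suffix is a suffix of $P$, we have $|\lfs{P}{\lambda_P}| \leq m$, hence $p - \lambda_P \leq \lg m$, and the number of significant suffixes $p - \lambda_P + 1$ is in $\Oh{\lg m}$.

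I do not expect a major obstacle: the argument is purely a length bookkeeping once the two key facts (the recursive decomposition $\lfs{P}{x} = P_x^{\tau_x} \lfs{P}{x+1}$ and the prefix condition on $\lfs{P}{x+1}$ inside $P_x$) are put together. The only thing that needs a brief sanity check is the boundary case $x = p$, which is handled by noting that $\lfs{P}{p} = P_p^{\tau_p}$ is nonempty so that the induction has a valid base, and the trivial case $p = \lambda_P$ where there is exactly one significant suffix and the bound is immediate.
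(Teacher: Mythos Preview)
The paper does not supply its own proof of this lemma; it merely quotes it from the cited source \textsf{[i16faster, Lemma~12]}. Your argument is correct and is essentially the standard one: the defining prefix condition forces $|\lfs{P}{x+1}| \le |P_x|$, hence $|\lfs{P}{x}| = \tau_x|P_x| + |\lfs{P}{x+1}| \ge 2\,|\lfs{P}{x+1}|$ for every $x \in [\lambda_P..p-1]$, and the geometric growth bounds $p-\lambda_P+1$ by $\Oh{\lg m}$. This is exactly the reasoning behind the cited result, so there is nothing to add.
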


In what follows, we study the occurrences of~$P$ in~$T$ under the circumstances that $T$ is represented by its Lyndon tree induced by the standard factorization,
while~$P$ is represented by its Lyndon factors.

\begin{lemma}[{\myCite{Prop.~1.10}{duval83lyndon}}]\label{lemDuvalLongestPrefixLyndon}
	The longest prefix of $P$ that is a Lyndon word is the first Lyndon factor~$P_1$ of~$P$.
\end{lemma}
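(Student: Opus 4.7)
The strategy is proof by contradiction. By definition of the Lyndon factorization, $P_1$ is itself a Lyndon word and a prefix of $P$, so it suffices to rule out any strictly longer Lyndon prefix. Suppose therefore that $Q$ is a Lyndon word, is a prefix of $P$, and satisfies $|Q|>|P_1|$. I would split the analysis on whether $|Q|\le|P_1^{\tau_1}|$, so that $Q$ lies entirely within the first composed Lyndon factor, or $|Q|>|P_1^{\tau_1}|$, so that $Q$ extends beyond it.

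In the inner case, write $Q=P_1^k u$ uniquely with $k\ge 1$ and $u$ a (possibly empty) proper prefix of $P_1$; the hypothesis $|Q|>|P_1|$ forces either $k\ge 2$ with $u=\varepsilon$, or $u\ne\varepsilon$. In each sub-case the string $P_1^{k-1}$, respectively $u$, is simultaneously a proper suffix of $Q$ and a proper prefix of $Q$ (using that $P_1$ is a prefix of $Q$ and $u$ is a prefix of $P_1$). Being a proper prefix makes it lexicographically strictly smaller than $Q$, contradicting the Lyndon property that every non-empty proper suffix of $Q$ be strictly greater than $Q$.

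In the extending case, write $Q=P_1^{\tau_1}v$ with $v$ a non-empty prefix of $P_2^{\tau_2}\cdots P_p^{\tau_p}$. The proper suffix of $Q$ at position $|P_1|+1$ equals $P_1^{\tau_1-1}v$, so Lyndon-ness of $Q$ yields $P_1^{\tau_1}v\prec P_1^{\tau_1-1}v$; stripping the common prefix $P_1^{\tau_1-1}$ reduces this to $P_1v\prec v$. I would target the contradictory conclusion $P_1v\succ v$, leveraging $P_1\succ P_2$ together with the fact that $v$ begins with (a prefix of) $P_2$. The easy sub-case is when $P_1$ and $P_2$ diverge at some position within the first $|v|$ characters: at that position $P_1$ carries the strictly greater character, so $P_1v\succ v$ follows directly, and when $|v|$ lies inside the common prefix of $P_1$ and $P_2$, $v$ is itself a prefix of $P_1$ and hence of $P_1v$, again giving $v\prec P_1v$.

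The main obstacle is the remaining sub-case in which $P_2$ is a proper prefix of $P_1$, say $P_1=P_2 s$ with $s\ne\varepsilon$. The initial $P_2$'s of $P_1v=P_2sv$ and of $v$ then cancel, and the comparison passes to $s$ against the continuation of $v$ past its first $P_2$. Because $P_1$ is Lyndon, its proper suffix $s$ satisfies $s\succ P_1\succ P_2$, so the same case split can be reapplied with $s$ playing the role of $P_1$ and the continuation of $v$ (which still starts with a prefix of some Lyndon factor of the remainder of the Lyndon factorization) playing the role of $v$. Since $|s|<|P_1|$ strictly, this recursion terminates in finitely many steps, and the terminal step falls in one of the previous ``easy'' sub-cases, yielding a first position at which $P_1v$ strictly exceeds $v$ and producing the desired contradiction with $P_1v\prec v$.
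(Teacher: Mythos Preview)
The paper does not supply its own proof of this lemma; it is simply cited from Duval~\cite{duval83lyndon} (Proposition~1.10). So there is no ``paper's proof'' to compare against. Your argument is essentially correct and self-contained, so it would serve as a valid proof.

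One small imprecision in your recursion for the ``main obstacle'' case: after stripping $P_2$ from $P_1v=P_2sv$ and from $v=P_2v'$ you are left comparing $sv$ with $v'$, not $sv'$ with $v'$. So the shape of the comparison is not literally the same as the original ``$P_1v$ versus $v$''; the left-hand side keeps the \emph{original} $v$ while only the right-hand side shrinks. The right invariant is therefore: at stage~$i$ you compare $s^{(i)}v$ with $v^{(i)}$, where $s^{(i)}$ is a proper suffix of $P_1$ (hence $s^{(i)}\succ P_1\succ P_k$ for whichever Lyndon factor $P_k$ the string $v^{(i)}$ begins in) and $v^{(i)}$ is a suffix of $v$ aligned to a Lyndon-factor boundary. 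Since $s^{(i)}\succ P_k$, either they diverge at some position $j\le\min(|s^{(i)}|,|P_k|)$ (terminal: $s^{(i)}v\succ v^{(i)}$ or $v^{(i)}$ is a prefix of $s^{(i)}v$) or $P_k$ is a proper prefix of $s^{(i)}$, in which case you strip $P_k$ and strictly shorten $s^{(i)}$. This is exactly your intended recursion, and it terminates; just be careful that the ``role of $v$'' phrase does not suggest the wrong invariant.

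As a stylistic note, a much shorter route (closer to Duval's original) is available once you allow uniqueness of the Lyndon factorization: let $Q$ be the longest Lyndon prefix of $P$ and write $P=QP'$; if $Q'$ is the longest Lyndon prefix of $P'$ then $Q\prec Q'$ would make $QQ'$ a longer Lyndon prefix by \cref{lemLyndonConcat}, so $Q\succeq Q'$, and inductively one obtains a Lyndon factorization of $P$ starting with $Q$; uniqueness then forces $Q=P_1$. Your elementary argument avoids appealing to uniqueness, at the cost of the explicit case analysis.
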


\begin{lemma}[{\myCite{Lemma~{5.4}}{bannai17runs}}]\label{lemLyndonSubstring}
	Given a production $X_j \rightarrow \LeftRule[j] \RightRule[j] \in \LYN{}$,
	there is no Lyndon word that is a substring of $\derive(X_j) = \derive(\LeftRule[j])\derive(\RightRule[j])$
	beginning in $\derive(\LeftRule[j])$ and ending in $\derive(\RightRule[j])$,
	except $\derive(X_j)$.
\end{lemma}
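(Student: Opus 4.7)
The plan is to argue by contradiction: assume $W = w[i..k]$ is a Lyndon substring of $w := \derive(X_j)$ with $1 \le i \le |u| < k \le |w|$ and $W \ne w$, where $u := \derive(\LeftRule[j])$ and $v := \derive(\RightRule[j])$, and derive an impossibility. The argument rests on two complementary properties of the standard factorisation $(u,v)$: (F1) $v$ is the \emph{longest} Lyndon proper suffix of $w$ (the defining property of the standard factorisation), and (F2) $v$ is the lexicographically \emph{smallest} proper suffix of $w$ (a standard consequence of (F1)).

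If $k = |w|$, then $W$ is a proper Lyndon suffix of $w$ (since $W \ne w$ forces $i \ge 2$), so (F1) gives $|W| \le |v|$, i.e., $i \ge |u|+1$, contradicting $i \le |u|$. Otherwise $k < |w|$; set $y := w[|u|+1..k]$ (a nonempty proper prefix of $v$) and $v' := w[k+1..|w|]$ (nonempty), so that $v = y v'$. Because $W$ is Lyndon and $y$ is a proper suffix of $W$, the Lyndon condition forces $W \prec y$; the identity $|W| - |y| = |u| - i + 1 \ge 1$ rules out $y$ being a prefix of $W$, so $W$ and $y$ differ at some position $p \le |y|$ with $W[p] < y[p]$. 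Appending $v'$ to both then yields $Wv' \prec yv' = v$. For $i \ge 2$ this is an immediate contradiction: $Wv' = w[i..|w|]$ is a proper suffix of $w$, so (F2) demands $v \preceq Wv'$.

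The remaining subcase is $i = 1$ with $k < |w|$, i.e., $W$ is a proper Lyndon prefix of $w$ strictly longer than $u$; the previous comparison now only yields the tautology $w \prec v$. This is the main obstacle. To handle it I would take the standard factorisation $W = u_W v_W$ of $W$ itself and examine the suffix $v_W v'$ of $w$, which starts at position $|u_W|+1 \ge 2$; the intended contradiction is that $v_W v'$ is a Lyndon proper suffix of $w$ with $|v_W v'| > |v|$, violating (F1). The delicate point is that $v_W v'$ need neither be automatically Lyndon nor strictly longer than $v$ after a single step, so the argument cannot reduce to a one-shot length or lexicographic comparison; instead it must track the interplay between the left spine of $W$'s Lyndon tree and the prefix $u$ of $w$ via (F2) applied at each level, ultimately forcing the left spine of $W$ to coincide with that of $w$ up through $u$. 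This recursive analysis is the combinatorial core of Bannai et al.~\cite{bannai17runs}'s Lemma~5.4, whose argument I would follow here.
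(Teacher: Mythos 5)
The paper does not prove this lemma itself; it is imported verbatim (as Lemma~5.4 of Bannai et al.), so there is no in-paper argument to compare against. Judged on its own terms, your proof is clean and correct for the cases it actually closes, namely all $W=w[i..k]$ with $i\ge 2$: the $k=|w|$ case falls to (F1), and for $k<|w|$ your comparison $Wv'\prec yv'=v$ against (F2) is exactly right (the observation that $|y|<|W|$ rules out $y$ being a prefix of $W$, so the strict letter difference survives the common extension by $v'$, is the key step and you have it). The facts (F1) and (F2) are both standard and correctly invoked.

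The gap is the case $i=1$, $|u|<k<|w|$, which you acknowledge but only sketch. That sketch cannot be completed, because the statement as written in the paper is in fact \emph{false} in this case. Take $w=\mathtt{aababb}$: it is a Lyndon word with standard factorization $(u,v)=(\mathtt{a},\mathtt{ababb})$, yet $W=w[1..3]=\mathtt{aab}$ is a Lyndon word, begins in $u$, ends in $v$, and $W\ne w$. Running your own recursive idea on this example shows exactly where it stalls: the standard factorization of $W=\mathtt{aab}$ is $(u_W,v_W)=(\mathtt{a},\mathtt{ab})$, and $v_Wv'=\mathtt{ab}\cdot\mathtt{abb}=\mathtt{ababb}$ is not \emph{longer} than $v$ but equal to it, so no violation of (F1) arises. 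The source lemma in Bannai et al.\ is formulated for the \emph{longest} Lyndon word starting at each position (the Lyndon array), for which the offending case is excluded automatically, and the paper's only use of the lemma (in \cref{lemPriOccLyndonWord}) is under the hypothesis $|\alpha|>0$, which forces every Lyndon factor of $P$ to start at an offset $\ge 2$ inside $\derive(X_i)$ --- precisely the regime your proof does cover. So your argument is sound where it matters, but you should add the hypothesis $i\ge 2$ (or restrict to longest Lyndon words) rather than hope the $i=1$ case can be salvaged.
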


\begin{lemma}\label{lemPriOccLyndonWord}
	Given $(\LeftPattern, \RightPattern)$ is a partition pair of a pattern $P$,
	$\RightPattern$ starts with a Lyndon factor of $P$ if
	there is an associated tuple~$(X_i, \alpha, \beta)$ with $|\alpha| > 0$.
\end{lemma}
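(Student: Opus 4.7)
The plan is to argue by contradiction using Lemma~\ref{lemLyndonSubstring}. Fix an associated tuple $(X_i,\alpha,\beta)$ with $|\alpha|>0$. Recall $\derive(X_i)=\alpha\LeftPattern\RightPattern\beta$ is a Lyndon word, with $\derive(\LeftRule)=\alpha\LeftPattern$ and $\derive(\RightRule)=\RightPattern\beta$ forming its standard factorization. Note in particular $|\derive(X_i)|>|\LeftPattern|+|\RightPattern|=|P|$ because $|\alpha|>0$.

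First I would translate the conclusion into a structural statement about the Lyndon factorization $P=P_1^{\tau_1}\cdots P_p^{\tau_p}$. Unrolling the composed factors gives a sequence of individual Lyndon word occurrences that together tile $P$. The split position $|\LeftPattern|+1$ either coincides with the start of one of these occurrences (in which case $\RightPattern$ begins with the corresponding Lyndon factor $P_k$, and we are done), or it lies strictly in the interior of one of them. Assume for contradiction the latter: then there exists an index $k$ and a particular occurrence of the Lyndon factor $P_k$ inside $P$ whose range strictly contains position $|\LeftPattern|+1$, so that this occurrence starts inside $\LeftPattern$ and ends inside $\RightPattern$.

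Now I would embed $P$ inside $\derive(X_i)=\alpha\LeftPattern\RightPattern\beta$. The above occurrence of $P_k$ is then a Lyndon word substring of $\derive(X_i)$ that starts within $\derive(\LeftRule)=\alpha\LeftPattern$ and ends within $\derive(\RightRule)=\RightPattern\beta$. Applying Lemma~\ref{lemLyndonSubstring} to the production $X_i\rightarrow\LeftRule\RightRule$ forces this Lyndon word to equal $\derive(X_i)$. But $P_k$ is a substring of $P$, so $|P_k|\le|P|<|\derive(X_i)|$, contradicting $P_k=\derive(X_i)$. Hence $\RightPattern$ must start with some Lyndon factor of $P$.

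The only delicate step is the translation in the first paragraph: one must argue carefully that "does not start with a Lyndon factor" really means that the split position lies strictly inside a single Lyndon word copy and not merely inside a composed factor $P_k^{\tau_k}$ at a boundary between two copies of $P_k$ (where $\RightPattern$ would still begin with the Lyndon factor $P_k$). Once that case distinction is phrased cleanly, the application of Lemma~\ref{lemLyndonSubstring} and the length comparison using $|\alpha|>0$ close the proof without further computation.
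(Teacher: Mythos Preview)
Your proof is correct and follows essentially the same approach as the paper: both rely on \cref{lemLyndonSubstring} together with the length inequality $|P|<|\derive(X_i)|$ coming from $|\alpha|>0$. The paper compresses this into two sentences, while you have spelled out the contradiction argument explicitly (including the careful distinction between a split inside a single copy of $P_k$ versus at a boundary inside $P_k^{\tau_k}$), but the underlying idea is identical.
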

\begin{proof}
	Since $|\alpha| > 0$ holds, $P$ is a proper substring of $\derive(X_i)$.
	Then $\RightPattern$ must start with a Lyndon factor of $P$ according to \cref{lemLyndonSubstring}.
\end{proof}

\begin{lemma}[{\myCite{Prop.~1.3}{duval83lyndon}}]\label{lemLyndonConcat}
	Given two Lyndon words~$\alpha$, $\beta$ with $\alpha \prec \beta$,
	the concatenation $\alpha \beta$ is also a Lyndon word.
\end{lemma}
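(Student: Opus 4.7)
The plan is to unpack the Lyndon-word definition and verify directly that $\alpha\beta \prec s$ for every non-empty proper suffix $s$ of $\alpha\beta$. I split on the length of $s$: either $s$ is a suffix of $\beta$ (i.e.\ $|s| \le |\beta|$), or $s = \alpha'\beta$ for a non-empty proper suffix $\alpha'$ of $\alpha$ (i.e.\ $|s| > |\beta|$).

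For the first case, I would first establish the auxiliary inequality $\alpha\beta \prec \beta$. If $\alpha$ and $\beta$ differ at some position $\le \min(|\alpha|,|\beta|)$, then by $\alpha \prec \beta$ the mismatch character of $\alpha$ is smaller, and the same mismatch position witnesses $\alpha\beta \prec \beta$. The only delicate sub-case is when $\alpha$ is a proper prefix of $\beta$: then $\beta = \alpha\gamma$ with $\gamma$ a non-empty proper suffix of $\beta$, so by $\beta$ being Lyndon we have $\beta \prec \gamma$, and hence $\alpha\beta \prec \alpha\gamma = \beta$. This sub-case is the main obstacle, since one has to resist the temptation to argue only by character mismatch; the Lyndon property of $\beta$ is needed here. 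With $\alpha\beta \prec \beta$ in hand, if $s = \beta$ we are done, and if $s$ is a shorter non-empty proper suffix of $\beta$, the Lyndon property of $\beta$ gives $\beta \prec s$, so $\alpha\beta \prec \beta \prec s$.

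For the second case, I write $s = \alpha'\beta$ where $\alpha'$ is a non-empty proper suffix of $\alpha$. Since $\alpha$ is Lyndon, $\alpha \prec \alpha'$. Because $|\alpha'| < |\alpha|$, $\alpha$ cannot be a proper prefix of $\alpha'$, so the strict inequality $\alpha \prec \alpha'$ must be witnessed by a character mismatch at some position $\le |\alpha'|$. That same position witnesses $\alpha\beta \prec \alpha'\beta = s$, which completes this case.

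Combining the two cases, every non-empty proper suffix $s$ of $\alpha\beta$ satisfies $\alpha\beta \prec s$, so $\alpha\beta$ is a Lyndon word by definition. I expect the write-up to be short; essentially all of the work sits in the sub-case where $\alpha$ is a prefix of $\beta$, which is the only place where one genuinely uses the Lyndon hypothesis on $\beta$ rather than just the comparison $\alpha \prec \beta$.
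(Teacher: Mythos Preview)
Your argument is correct. Note that the paper does not supply its own proof of this lemma: it is quoted as Prop.~1.3 of Duval without proof, as a classical fact. Your direct verification---splitting on whether the proper suffix $s$ lies entirely inside $\beta$ or has the form $\alpha'\beta$ for a proper suffix $\alpha'$ of $\alpha$, and handling the delicate sub-case where $\alpha$ is a prefix of $\beta$ via the Lyndon property of $\beta$---is exactly the standard proof of this statement (as in Duval or Lothaire), so there is nothing to compare.
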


\begin{lemma}\label{lemLyndonWordRuns}
	Given $(\LeftPattern, \RightPattern)$ is a partition pair of a pattern $P$,
	$\RightPattern$ starts with a composed Lyndon factor of $P$ if
	there is an associated tuple~$(X_i, \alpha, \beta)$ with $|\alpha| > 0$.
\end{lemma}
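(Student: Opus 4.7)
The plan is a proof by contradiction that sharpens \cref{lemPriOccLyndonWord} by exploiting the characteristic property of Lyndon SLPs: for every rule $X_i \to \LeftRule \RightRule$, the pair $(\derive(\LeftRule), \derive(\RightRule))$ is the standard factorization of $\derive(X_i)$, so $\derive(\RightRule) = \RightPattern\beta$ is the \emph{longest} proper Lyndon suffix of $\derive(X_i)$.

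By \cref{lemPriOccLyndonWord}, the split of the partition pair sits on a Lyndon-factor boundary, and I parametrize it as $\RightPattern = P_x^{k}\, P_{x+1}^{\tau_{x+1}}\cdots P_p^{\tau_p}$, with $\LeftPattern$ ending in $P_x^{\tau_x-k}$ for some $x \in [1..p]$ and $k \in [1..\tau_x]$. It suffices to rule out $k < \tau_x$, i.e., to show the split cannot lie strictly inside the run of a repeated Lyndon factor. Under the assumption $k < \tau_x$ we have $\tau_x \ge 2$, $\LeftPattern$ ends with a copy of the Lyndon word $P_x$, and the Lyndon word $\RightPattern\beta$ begins with at least one copy of $P_x$.

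The strategy is to produce a Lyndon suffix of $\derive(X_i)$ strictly longer than $\RightPattern\beta$, contradicting maximality. The natural candidate is $P_x \cdot \RightPattern\beta$: it is a suffix of $\derive(X_i) = \alpha\LeftPattern\RightPattern\beta$ because $\LeftPattern$ ends with $P_x$, and it is a proper suffix because $|\alpha|>0$ implies $|\alpha\LeftPattern|>|P_x|$. To see that it is Lyndon, I would invoke \cref{lemLyndonConcat} on the two Lyndon words $P_x$ and $\RightPattern\beta$: since $P_x$ is a proper prefix of the strictly longer word $\RightPattern\beta$, we get $P_x \prec \RightPattern\beta$, and the concatenation is Lyndon.

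The delicate point — and the main obstacle I foresee — is the boundary case $|\RightPattern\beta|=|P_x|$, which collapses the candidate to $P_x^2$ (never Lyndon) and makes \cref{lemLyndonConcat} inapplicable. This case forces $x=p$, $k=1$, $\beta=\varepsilon$, and $\tau_p\ge 2$, so $\derive(X_i)$ must end with the square $P_p^{\tau_p}$. I would close it by a dedicated structural analysis rather than via \cref{lemLyndonConcat}: recurse on $\derive(\LeftRule)$, which by the Lyndon-SLP property is itself a Lyndon word ending in $P_p^{\tau_p-1}$, and track how its own standard factorization peels off copies of $P_p$ until the run is exhausted and a genuinely different Lyndon suffix must appear, at which point \cref{lemLyndonSubstring} can be applied to a Lyndon substring straddling some internal rule boundary to obtain the contradiction.
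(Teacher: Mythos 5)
Your main case is exactly the paper's proof: parametrize the split as $\derive(\RightRule) = P_x^{k}\lfs{P}{x+1}\beta$ with $1 \le k < \tau_x$, and use \cref{lemLyndonConcat} to prepend one copy of $P_x$ (which fits inside $\derive(X_i)$ because $|\alpha| > 0$), contradicting that $\derive(\RightRule)$ is the longest Lyndon suffix of $\derive(X_i)$. You are right, however, to single out the boundary case $\derive(\RightRule) = P_x$ (forcing $x = p$, $k = 1$, $\beta = \varepsilon$, $\tau_p \ge 2$): there $P_x \prec \derive(\RightRule)$ fails, \cref{lemLyndonConcat} does not apply, and the candidate collapses to the non-Lyndon square $P_p^2$. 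The paper's proof asserts $P_x \prec \derive(\RightRule)$ without justification and never addresses this subcase, so you have located a genuine gap in the published argument.

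The recursive analysis you sketch cannot close that gap, though, because the lemma as stated actually fails in the boundary case. Take the paper's running text $T = \texttt{aababaababb}$ and the pattern $P = \texttt{abab}$, whose composed Lyndon factorization is $(\texttt{ab})^2$, so $p = 1$, $P_1 = \texttt{ab}$, $\tau_1 = 2$. The rule $X_8 \to X_7 X_3$ with $\derive(X_7) = \texttt{aab}$ and $\derive(X_3) = \texttt{ab}$ makes $(\texttt{ab}, \texttt{ab})$ a partition pair with associated tuple $(X_8, \texttt{a}, \varepsilon)$; here $|\alpha| = 1 > 0$, yet $\RightPattern = \texttt{ab}$ is not the composed Lyndon factor $\texttt{abab}$. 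The proper repair is therefore not a cleverer contradiction but a weaker conclusion that additionally allows $\RightPattern = P_p$ when $\tau_p \ge 2$; this adds at most one partition pair, so \cref{lemSignificantSuffix} and the $\Oh{\lg m}$ bound of \cref{corPartitionPairs} still go through.
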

\begin{proof}
	Let $(X_i, \alpha, \beta)$ be a tuple associated with $(\LeftPattern, \RightPattern)$.
	Assume for the contrary that $\RightPattern$ does not start with any composed Lyndon factors of $P$,
	namely, there exists $x \in [1..p]$ and $k \in [1..\tau_x-1]$
	such that $\LeftPattern$ and $\RightPattern$ have
	$P_x^{\tau_x-k}$ and $P_x^{k}$ as a suffix and prefix, respectively (cf.~\figref{fig:figLyndonWordRuns}).
	By the assumption, $\derive(\RightRule) = P_x^{k} \lfs{P}{x+1} \beta$ is the longest Lyndon word
	that is a suffix of $\derive(X_i)$.
	Since $P_x \prec \derive(\RightRule)$ and $P_x$ is a Lyndon word,
	$P_x \derive(\RightRule)$ is also a Lyndon word by \cref{lemLyndonConcat}.
	This contradicts that $\derive(\RightRule)$ is the longest Lyndon word
	that is a suffix of $\derive(X_i)$.
\end{proof}

\begin{figure}
	\begin{center}
		\includegraphics[width=0.6\textwidth]{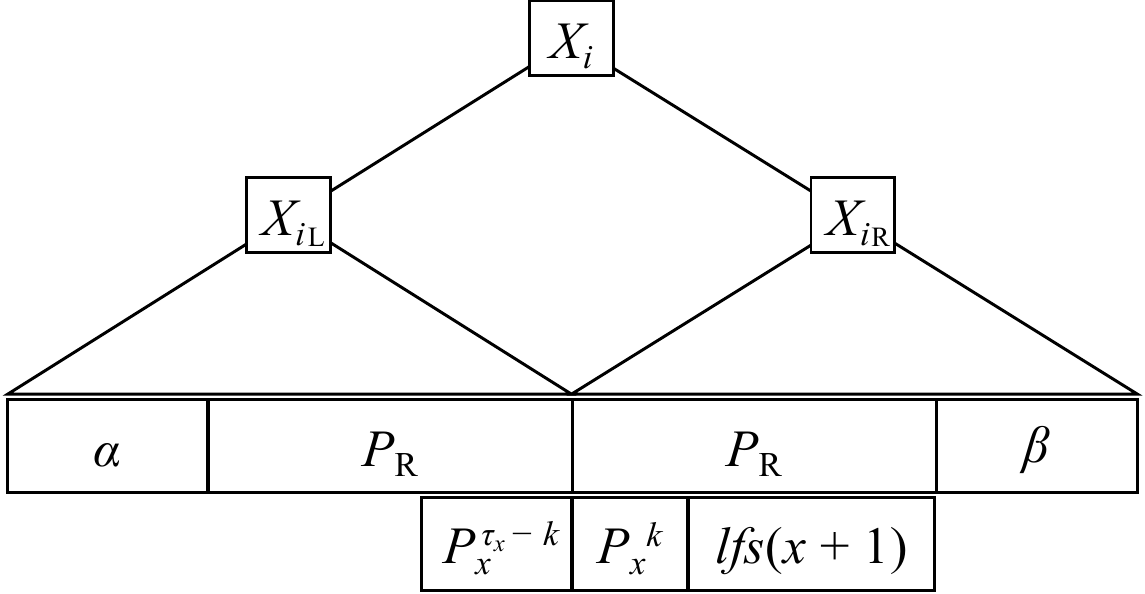}
	\end{center}
	\caption{%
		Setting of the proof of \cref{lemLyndonWordRuns}.
	}
	\label{fig:figLyndonWordRuns}
\end{figure}

\cref{lemLyndonWordRuns} helps us to concentrate on the \emph{composed} Lyndon factors.
Next, we show that only those composed Lyndon factors are interesting that start with a significant suffix:

\begin{lemma}\label{lemSignificantSuffix}
	Given $(\LeftPattern, \RightPattern)$ is a partition pair of a pattern $P$,
	then $\RightPattern$ is a significant suffix of $P$ if
	there is an associated tuple~$(X_i, \alpha, \beta)$ with $|\alpha| > 0$.
\end{lemma}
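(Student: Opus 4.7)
I would prove this by contradiction, mirroring the template used for \cref{lemLyndonWordRuns}. From \cref{lemLyndonWordRuns} I already have $\RightPattern = \lfs{P}{x}$ for some $x \in [1..p]$; what remains is to show $x \geq \lambda_P$. Assuming $x < \lambda_P$, the plan is to exhibit a Lyndon word $\gamma$ such that $\gamma \cdot \derive(\RightRule)$ is a strictly longer Lyndon proper suffix of $\derive(X_i) = \alpha \LeftPattern \RightPattern \beta$ than $\derive(\RightRule) = \RightPattern \beta$. Because $\derive(\RightRule)$ is, by the standard factorization of the Lyndon word $\derive(X_i)$, the \emph{longest} Lyndon proper suffix of $\derive(X_i)$, this would yield the desired contradiction.

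The key tool for producing $\gamma$ is the minimality of $\lambda_P$: by definition the prefix condition ``$\lfs{P}{y+1}$ is a prefix of $P_y$'' fails at $y = \lambda_P - 1$, so $\lfs{P}{\lambda_P}$ is not a prefix of $P_{\lambda_P - 1}$. Since $x \leq \lambda_P - 1$, the block $P_{\lambda_P - 1}^{\tau_{\lambda_P - 1}} \lfs{P}{\lambda_P}$ is contained in $\RightPattern$, so the failure manifests as a concrete structural disagreement inside $\derive(\RightRule)$. From this disagreement, combined with the hypothesis $|\alpha| > 0$ (which ensures enough preceding context is present in $\derive(X_i)$), I would locate a suffix of $\alpha\LeftPattern$ that is itself a Lyndon word satisfying $\gamma \prec \derive(\RightRule)$; then \cref{lemLyndonConcat} proves $\gamma \cdot \derive(\RightRule)$ is Lyndon, contradicting the maximality of $\derive(\RightRule)$.

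The main obstacle is identifying the correct $\gamma$ and verifying $\gamma \prec \derive(\RightRule)$. The naive choices, such as $\gamma = P_{x-1}$ or a longer concatenation of preceding Lyndon factors, satisfy $P_{x-1} \succ P_x$ by the strictly decreasing order of the Lyndon factorization, and a short comparison shows $P_{x-1} \succeq \derive(\RightRule)$ in essentially every sub-case, so \cref{lemLyndonConcat} does not apply directly to these candidates. The construction of $\gamma$ therefore requires a careful case split on \emph{how} the prefix condition fails at $y = \lambda_P - 1$: either (i) $\lfs{P}{\lambda_P}$ differs from $P_{\lambda_P - 1}$ within the overlap of their lengths, where a character-mismatch argument propagates to produce $\gamma$ directly; or (ii) $P_{\lambda_P - 1}$ is a proper prefix of $\lfs{P}{\lambda_P}$, which seems to need a recursive unrolling through the nested Lyndon structure between levels $\lambda_P$ and $p$. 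Sub-case~(ii) is the step I expect to be most delicate and might require an induction on $p - \lambda_P$.
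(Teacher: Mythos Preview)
Your plan mirrors \cref{lemLyndonWordRuns} too literally, and that is what drags you into the case split you correctly flag as delicate. The paper does \emph{not} look for a Lyndon word $\gamma$ to the left of $\derive(\RightRule)$ in order to produce a longer Lyndon suffix of $\derive(X_i)$. Instead it derives the contradiction from \emph{inside} $\derive(\RightRule)$: with $\RightPattern=\lfs{P}{x}$ (via \cref{lemLyndonWordRuns}) and $x<\lambda_P$, the paper observes that $\lfs{P}{x}\succ\lfs{P}{x+1}$ and that $\lfs{P}{x+1}$ is not a prefix of $\lfs{P}{x}$; hence the comparison is decided before position $|\lfs{P}{x+1}|$ and survives appending $\beta$, giving $\lfs{P}{x}\beta\succ\lfs{P}{x+1}\beta$. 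But $\lfs{P}{x+1}\beta$ is a non-empty proper suffix of $\derive(\RightRule)=\lfs{P}{x}\beta$, so $\derive(\RightRule)$ violates the Lyndon property --- contradiction. No $\gamma$, no \cref{lemLyndonConcat}, no case split on how the prefix condition fails at $\lambda_P-1$; the hypothesis $|\alpha|>0$ is used only through the appeal to \cref{lemLyndonWordRuns}.

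Your proposal, by contrast, is genuinely incomplete: you never construct $\gamma$ in sub-case~(ii), and the obstacle you identify (that every obvious candidate such as $P_{x-1}$ satisfies $P_{x-1}\succeq\derive(\RightRule)$ rather than $\prec$) is real. Even if an inductive unrolling could be made to work, it would be substantially longer than the two-line argument above. The missing idea is simply to change the direction of the contradiction: rather than lengthening the longest Lyndon suffix, deny that $\derive(\RightRule)$ is Lyndon at all.
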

\begin{proof}
	Let $(X_i, \alpha, \beta)$ be a tuple associated with $(\LeftPattern, \RightPattern)$ and $|\alpha| > 0$.
	By \cref{lemLyndonWordRuns},
	there exists $x \in [1..p]$ such that $\RightPattern = P_x^{\tau_x} \cdots P_p^{\tau_p}$.
	Assume for the contrary that $x < \lambda_P$, i.e., $\RightPattern$ is not a significant suffix of~$P$.
	By definition, $\lfs{P}{x} \succ \lfs{P}{x+1}$ holds.
	Since $\lfs{P}{x+1}$ is not a prefix of $\lfs{P}{x}$,
	$\lfs{P}{x} \beta \succ \lfs{P}{x+1} \beta$ also holds.
	This implies that $\RightPattern = \lfs{P}{x} \beta$ is not a Lyndon word, a contradiction.
\end{proof}
This, together with \cref{lem:LambdaUpperBound}, yields the following corollary.

\begin{corollary}\label{corPartitionPairsNonEmpty}
	There are \Oh{\lg m} partition pairs of~$P$ associated with a tuple~$(X_i, \alpha, \beta)$ with $|\alpha| > 1$.
\end{corollary}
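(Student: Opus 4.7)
The proof plan is essentially to chain together the two already-established lemmas. A partition pair of $P$ is specified by a single split position $i \in [1..m]$, so distinct partition pairs yield distinct right halves $\RightPattern = P[i+1..m]$. Hence it suffices to count the possible values of $\RightPattern$ when the associated tuple satisfies $|\alpha| > 0$ (which subsumes the case $|\alpha| > 1$ stated in the corollary).

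First, I would invoke \cref{lemSignificantSuffix}: for any partition pair $(\LeftPattern, \RightPattern)$ of $P$ with an associated tuple $(X_i, \alpha, \beta)$ having $|\alpha| > 0$, the right half $\RightPattern$ must be a significant suffix of $P$. Since each partition pair is uniquely determined by its split position, and each split position determines $\RightPattern$ uniquely, the number of such partition pairs is at most the number of distinct significant suffixes of $P$.

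Then I would apply \cref{lem:LambdaUpperBound}, which directly bounds the number of significant suffixes of $P$ by $\Oh{\lg m}$. Composing the two bounds yields the claim.

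There is no real obstacle here; the substance of the argument has already been carried out in \cref{lemPriOccLyndonWord,lemLyndonWordRuns,lemSignificantSuffix}. The only thing worth stating carefully in the write-up is the observation that a partition pair is indexed by its split position, so counting admissible right halves is the same as counting admissible partition pairs.
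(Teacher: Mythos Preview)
Your proposal is correct and matches the paper's own justification: the paper explicitly states that the corollary follows by combining \cref{lemSignificantSuffix} with \cref{lem:LambdaUpperBound}, which is exactly what you do. Your additional remark that distinct partition pairs correspond to distinct split positions (hence distinct right halves) is the only implicit step, and you have spelled it out correctly.
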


Let us take $P := \texttt{abacabadabacababa}$ as an elaborated example.
Its composed Lyndon factorization is $P = P_1 P_2 P_3^2 P_4$, where its Lyndon factors are
$P_1 = \texttt{abacabad}$, $P_2 = \texttt{abac}$, $P_3 = \texttt{ab}$, and $P_4 = \texttt{a}$ with $\lambda_P = 3$.
Hence, $\lfs{P}{3}$ and $\lfs{P}{4}$ are significant suffixes.
Its potential partition pairs are $(P_1 P_2, P_3^2 P_4)$, $(P_1 P_2 P_3^2, P_4)$.
There is no Lyndon SLP such that another partitioning like $(P_1, P_2 P_3^2 P_4)$ or $(P_1 P_2 P_3, P_3 P_4)$ would have an associated tuple
according to \cref{lemSignificantSuffix} and \cref{lemLyndonWordRuns}, respectively.

\subsection{Associated tuples with empty $\alpha$}\label{secAlphaEmpty}
Given a partition pair~$(\LeftPattern, \RightPattern)$ associated with a tuple~$(X_i, \varepsilon, \beta)$,
we consider two cases depending on $|\LeftPattern|$:
In the case of $|\LeftPattern| = 1$, $(P[1], P[2..m])$ may be a partition pair of $P$.
In the case of $|\LeftPattern| \ge 2$, suppose that $P^{\prime} = P[2..m]$, $\alpha^{\prime} = P[1]$
and $(\LeftPattern^{\prime}, \RightPattern^{\prime})$ is a partition pair of $P^{\prime}$
with associated tuple~$(X_i, \alpha^{\prime}, \beta)$.
Then, $(P[1]\LeftPattern^{\prime}, \RightPattern^{\prime})$ is a partition pair of $P$ with associated tuple~$(X_i, \varepsilon, \beta)$.
We can use \cref{lemPriOccLyndonWord}, \cref{lemLyndonWordRuns} and \cref{lemSignificantSuffix}
to restrict $\RightPattern^{\prime}$ starting with a significant suffix of $P[2..m]$ (cf.~\cref{corPartitionPairsNonEmpty}).

\begin{corollary}\label{corPartitionPairsEmpty}
	There are $\Oh{\lg m}$ partition pairs of~$P$ associated with a tuple~$(X_i, \varepsilon, \beta)$.
\end{corollary}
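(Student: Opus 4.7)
The plan is to reduce the statement directly to \cref{corPartitionPairsNonEmpty} by exploiting the correspondence set up in the paragraph preceding the corollary. First I would split the partition pairs~$(\LeftPattern,\RightPattern)$ of~$P$ associated with a tuple of the form $(X_i,\varepsilon,\beta)$ according to $|\LeftPattern|$. The case $|\LeftPattern|=1$ contributes at most one partition pair, namely $(P[1],P[2..m])$, since the position of the split is fixed.

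For the case $|\LeftPattern|\ge 2$, I would formalize the reduction already sketched in \cref{secAlphaEmpty}: to every such partition pair $(\LeftPattern,\RightPattern) = (P[1]\LeftPattern',\RightPattern')$ of~$P$ associated with $(X_i,\varepsilon,\beta)$, associate the partition pair $(\LeftPattern',\RightPattern')$ of $P' := P[2..m]$ together with the tuple $(X_i, P[1], \beta)$. The key observation is that this is a well-defined partition pair of~$P'$ with an associated tuple whose left padding $\alpha' := P[1]$ has length~$1 > 0$: indeed $\derive(\LeftRule) = P[1]\LeftPattern'$ still has $\LeftPattern'$ as a suffix, $\derive(\RightRule) = \RightPattern'\beta$ still has $\RightPattern'$ as a prefix, and the standard factorization of $\derive(X_i)$ is unchanged. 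Conversely, every partition pair of~$P'$ with tuple $(X_i, P[1], \beta)$ lifts back to a unique partition pair of $P$ associated with $(X_i,\varepsilon,\beta)$ by prepending $P[1]$ to $\LeftPattern'$.

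Applying \cref{corPartitionPairsNonEmpty} to $P'$ (whose length is $m-1$) bounds the number of partition pairs of $P'$ associated with a tuple whose padding is non-empty by $\Oh{\lg(m-1)} = \Oh{\lg m}$. Adding the single partition pair from the case $|\LeftPattern|=1$ gives the claimed $\Oh{\lg m}$ bound.

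The only subtle point is ensuring the correspondence is bijective and that the conditions in the definition of ``partition pair'' and ``associated tuple'' are preserved under stripping (or prepending) the first character; this is a direct unfolding of the definitions and should be routine. No new ideas beyond \cref{lemPriOccLyndonWord,lemLyndonWordRuns,lemSignificantSuffix} and \cref{lem:LambdaUpperBound} (which are already packaged in \cref{corPartitionPairsNonEmpty}) are needed.
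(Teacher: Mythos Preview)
Your proposal is correct and follows essentially the same approach as the paper, which sketches precisely this reduction in the paragraph preceding the corollary: split on $|\LeftPattern|$, handle $|\LeftPattern|=1$ trivially, and for $|\LeftPattern|\ge 2$ strip $P[1]$ to obtain a partition pair of $P'=P[2..m]$ with $|\alpha'|=1>0$, then invoke the non-empty case via \cref{lemSignificantSuffix} and \cref{lem:LambdaUpperBound}. One cosmetic caveat: \cref{corPartitionPairsNonEmpty} as printed requires $|\alpha|>1$ (evidently a typo for $|\alpha|>0$), so your explicit appeal to the underlying lemmas rather than to the corollary alone is the safer formulation.
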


Combining \cref{corPartitionPairsNonEmpty} with \cref{corPartitionPairsEmpty} yields the following theorem and the main result of this subsection:

\begin{theorem}\label{corPartitionPairs}
	There are \Oh{\lg m} partition pairs of a pattern of length~$m$.
\end{theorem}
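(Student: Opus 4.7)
The plan is short because the real work has already been carried out in the two preceding corollaries; the theorem is essentially a union bound.

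First I would observe that, by definition, every partition pair~$(\LeftPattern,\RightPattern)$ of~$P$ admits at least one associated tuple~$(X_i,\alpha,\beta)$: a rule $X_i \to \LeftRule\RightRule$ witnessing $\derive(\LeftRule)$ having $\LeftPattern$ as a suffix and $\derive(\RightRule)$ having $\RightPattern$ as a prefix must exist, and this fixes the strings $\alpha,\beta$ with $\derive(\LeftRule)=\alpha\LeftPattern$ and $\derive(\RightRule)=\RightPattern\beta$. So I can classify every partition pair by the emptiness of $\alpha$ in (at least) one of its associated tuples: either there is an associated tuple with $|\alpha|>0$, or every associated tuple has $\alpha=\varepsilon$.

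Then I would simply invoke the two corollaries. \cref{corPartitionPairsNonEmpty} bounds the first class by $\Oh{\lg m}$ via the significant-suffix argument (\cref{lemSignificantSuffix}) combined with \cref{lem:LambdaUpperBound}. \cref{corPartitionPairsEmpty} bounds the second class by $\Oh{\lg m}$ by reducing the $\alpha=\varepsilon$ case either to the trivial split $(P[1],P[2..m])$ or to a partition pair of the shorter pattern~$P[2..m]$ with a non-empty $\alpha'=P[1]$, to which the first corollary then applies. Summing the two contributions yields $\Oh{\lg m}+\Oh{\lg m}=\Oh{\lg m}$ partition pairs overall.

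The only subtlety, and hence the only place where care is needed, is that a single partition pair may be witnessed by several different associated tuples (different rules $X_i$ or different factorizations inside the rule). I would therefore phrase the classification so that each partition pair is counted in the class corresponding to \emph{any one} of its associated tuples with the relevant value of $|\alpha|$; since this gives an upper bound on the count, the sum of the two corollaries still upper-bounds the total number of distinct partition pairs. No new lemma is needed beyond what has been proven in \cref{corPartitionPairsNonEmpty} and \cref{corPartitionPairsEmpty}.
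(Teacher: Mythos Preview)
Your proposal is correct and matches the paper's own argument, which simply states that combining \cref{corPartitionPairsNonEmpty} with \cref{corPartitionPairsEmpty} yields the theorem. Your extra care in making the case split explicit (every partition pair has at least one associated tuple, and that tuple either has $|\alpha|>0$ or $\alpha=\varepsilon$) is exactly the reasoning the paper leaves implicit.
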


\subsection{Locating a pattern}

In the following, we use the partition pairs to find all primary occurrences.
We do this analogously as for the $\Gamma$-tree~(\myCite{Sect.~3.1.}{navarro19indexing})
or for special grammars~(\myCite{Sect.~6.1}{christiansen18optimaltime}).

\begin{lemma}[{\myCite{Lemma 5.2}{gagie18bwt}}]\label{lemSuffixRangeQuery}
	Let $\mathcal{S}$ be a set of strings and assume that we can
	(a) extract a substring of length~$\ell$ of a string in $\mathcal{S}$ in time~$f_e(\ell)$
	and (b) compute the Karp-Rabin fingerprint~\cite{DBLP:journals/ibmrd/KarpR87} of a substring of a string in $\mathcal{S}$ in time~$f_h$.
	Then we can build a data structure of \Oh{|\mathcal{S}|} words solving
	the following problem in $\Oh{m \lg\sigma/w + t(f_h + \lg m)+ f_e(m)}$ time:
		given a pattern $P[1..m]$ and $t > 0$ suffixes $Q_1, \ldots , Q_t$ of $P$,
	discover the ranges of strings in (the lexicographically-sorted) $\mathcal{S}$ prefixed by $Q_1, \ldots, Q_t$.
\end{lemma}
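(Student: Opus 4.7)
The plan is to realize the index as a z-fast trie over the strings of $\mathcal{S}$ in lexicographic order, a structure that fits in $\Oh{|\mathcal{S}|}$ words and supports prefix-range queries on a pattern in $\Oh{\lg m}$ hash-table probes. The key observation is that every $Q_j$ is a suffix of the same pattern $P$, so all fingerprints of prefixes of the queries can be derived from a single precomputed sequence of prefix fingerprints of~$P$, without ever paying $f_h$ during the descent.

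First I would preprocess $P$ by reading its packed representation in $\Oh{m\lg\sigma/w}$ time and computing the Karp--Rabin fingerprints $\phi(P[1..i])$ for every $i\in[1..m]$. The standard subtract-and-rescale identity of Karp--Rabin hashing then yields the fingerprint of any substring of $P$, and in particular of any prefix of any~$Q_j$, in $\Oh{1}$ time. In parallel I would build the z-fast trie of $\mathcal{S}$, whose branching nodes are indexed in a global hash table keyed by the Karp--Rabin hash of their incoming path label; such a trie is constructible in $\Oh{|\mathcal{S}|}$ words given the fingerprint oracle on $\mathcal{S}$.

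For each $Q_j$, I would run the z-fast descent: its exponential search produces $\Oh{\lg m}$ candidate node depths, and at each depth it probes the global hash table with the fingerprint of the corresponding prefix of $Q_j$, which is available in $\Oh{1}$ thanks to the preprocessing of $P$. This locates a candidate exit node in $\Oh{\lg m}$ time per query. A single Karp--Rabin computation on $\mathcal{S}$, of cost $f_h$, then certifies the path label of the exit node against $Q_j$; if it matches, the sought range is exactly the range of leaves under the exit node, read off in $\Oh{1}$. To remove the remaining failure probability of Karp--Rabin fingerprints I would, once, extract a single representative string of length~$m$ from $\mathcal{S}$ at cost $f_e(m)$ and scan it against $P$, which lets me deterministically reconcile all $t$ candidate ranges because they are all pinned to substrings of that same extracted string.

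The main obstacle, and the reason for using a z-fast trie rather than a plain Patricia trie, is obtaining the per-query bound $\Oh{f_h + \lg m}$ instead of the naive $\Oh{f_h \lg m}$: the z-fast trie lets us pay the (potentially expensive) $f_h$ cost on $\mathcal{S}$-substrings only \emph{once} per query, while the $\Oh{\lg m}$ probes performed during the descent touch only fingerprints of suffixes of~$P$, all of which have been made $\Oh{1}$-accessible by the preprocessing. The second delicate point is amortizing the extraction: one must argue that one length-$m$ extraction is enough to disambiguate all $t$ answers, which relies on the fact that the verification substrings required by the $t$ queries are all prefixes of a single $\mathcal{S}$-string once the exit nodes are fixed by the z-fast descent, and this is ultimately what gives the additive, rather than multiplicative, $f_e(m)$ term in the bound.
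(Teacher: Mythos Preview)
The paper does not prove this lemma; it is quoted from \cite[Lemma~5.2]{gagie18bwt} without argument. Your plan---a z-fast trie on $\mathcal{S}$, precomputed Karp--Rabin prefix fingerprints of $P$ so that each of the $\Oh{\lg m}$ descent probes costs $\Oh{1}$, plus one $f_h$ call on the $\mathcal{S}$ side per suffix---is indeed the technique underlying the cited reference, so at that level you are aligned with what the paper imports.

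The gap is in your verification step. You assert that one length-$m$ extraction from $\mathcal{S}$ ``lets me deterministically reconcile all $t$ candidate ranges because they are all pinned to substrings of that same extracted string.'' This is not justified: the $t$ suffixes $Q_1,\dots,Q_t$ may land at $t$ unrelated exit nodes of the trie, each pointing to a different string of $\mathcal{S}$; nothing forces those strings to coincide or even overlap, so a single extraction cannot serve as a common witness for all $t$ comparisons. In the cited result the per-query correctness is delivered by the $f_h$ fingerprint comparison (and in the paper's instantiation the fingerprints are made collision-free via \cref{lemFingerprintComputation}, so this step is already deterministic); the additive $f_e(m)$ term has a different origin than the all-in-one verification you describe. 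Your sketch conflates the role of $f_h$ (verification against $\mathcal{S}$) with that of $f_e$, and this is precisely what loses the accounting that keeps the extraction cost additive rather than multiplicative in~$t$.

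A smaller imprecision: writing out all $m$ prefix fingerprints of $P$ costs $\Theta(m)$, not $\Oh{m\lg\sigma/w}$, since there are $m$ outputs. To stay within the stated bound one stores only word-aligned prefix fingerprints (there are $\Oh{m\lg\sigma/w}$ of them, each computable in $\Oh{1}$ from its predecessor by a single packed update) and recovers the $\Oh{t\lg m}$ fingerprints actually needed during the descents on demand.
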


\begin{lemma}[{\myCite{Thm.~1.1}{bille15randomaccess}}]\label{lemSubstringExtraction}
	For an AG of size $g$ representing a string of length $n$ we can
	extract a substring of length $\ell$ in time $\Oh{\ell + \lg n}$
	after \Oh{g} preprocessing time and space.
\end{lemma}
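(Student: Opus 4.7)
Since this lemma is attributed to Bille et al., the plan is to outline the heavy-path decomposition technique that underlies their random-access data structure, and then argue that it extends from character extraction to substring extraction by amortization.

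First, I would convert the admissible grammar into a binary form at no more than a constant factor blow-up in size (each rule of right-hand side length $\ell > 2$ is split into $\ell-1$ binary rules), so that each non-terminal has exactly two children in its derivation rule, and the derivation DAG has $\Oh{g}$ nodes. Next, on this DAG I would perform heavy-path decomposition: at each rule $X \rightarrow YZ$, declare the child whose derived length is larger to be the heavy child and the other the light child. A root-to-leaf path in the full derivation tree crosses $\Oh{\lg n}$ light edges, because each light step at least halves the length of the remaining subtree. For every heavy path I would precompute, in one array indexed by position along the path, the cumulative offsets and pointers to the light subtrees hanging off the path; with $\Oh{g}$ variables in total, this takes $\Oh{g}$ preprocessing time and space.

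To extract a single character $T[i]$, I would descend from the start symbol, at each heavy path performing a binary (or predecessor) search in that path's offset array to find the light subtree containing the requested offset, and recurse. The number of heavy paths traversed is $\Oh{\lg n}$, which dominates the cost. To extract a substring $T[i..i+\ell-1]$, I would locate the leaf for position $i$ in $\Oh{\lg n}$ time, and then walk rightward through successive leaves of the derivation tree: at each step I ascend from the current leaf until I find the first ancestor that is a left child, then descend into its right sibling's leftmost path. The amortized cost per emitted character is $\Oh{1}$, because every light edge ascended or descended can be charged either to the initial descent or to a distinct character output; summing, we obtain $\Oh{\ell + \lg n}$ time.

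The main subtle point is the amortization for the leaf-to-leaf traversal: a naive bound would give $\Oh{\ell \lg n}$. The standard argument observes that after the initial $\Oh{\lg n}$ descent, the entire traversal visits each heavy-path segment at most twice (once downward, once upward), and the total length of visited heavy-path segments is bounded by the number of characters emitted plus $\Oh{\lg n}$ for the boundary effects; charging the light edges to the heavy-path segments they delimit yields the claimed bound. Since this is precisely the argument given in \cite{bille15randomaccess}, I would cite their Theorem~1.1 for the full details rather than reprove it here.
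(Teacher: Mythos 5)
The paper does not reprove this lemma; it simply cites \cite{bille15randomaccess}, so there is no ``paper proof'' to compare against. That said, your reconstruction of the underlying technique is largely faithful in outline, but it has one genuine gap that would cause the bound you claim to fail.

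You state that during a single-character access you perform ``a binary (or predecessor) search in that path's offset array'' at each of the $\Oh{\lg n}$ heavy paths and that ``the number of heavy paths traversed is $\Oh{\lg n}$, which dominates the cost.'' This last claim is not justified and, as stated, is false: if heavy path $i$ has $k_i$ light children, a binary search on it costs $\Oh{\lg k_i}$, and the sum $\sum_i \lg k_i$ over the $\Oh{\lg n}$ heavy paths visited can be $\Theta(\lg^2 n)$ in the worst case (e.g., when each heavy path is roughly square-root the remaining weight). The essential technical contribution of Bille et al.\ is precisely to avoid this blow-up: they replace the uniform binary search by a weight-biased search structure (``interval biased search trees'') over the light children of each heavy path, so that locating the light child of weight $w$ inside a heavy path of total weight $W$ costs $\Oh{1 + \lg(W/w)}$. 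Since the weight entering heavy path $i{+}1$ equals the weight $w$ of the light child chosen inside heavy path $i$, these costs telescope to $\Oh{\lg n}$ overall. Without this biasing, your argument only yields $\Oh{\lg^2 n}$ access time and $\Oh{\ell + \lg^2 n}$ extraction time. Your amortization argument for the leaf-to-leaf sweep is otherwise in the right spirit, but it also implicitly needs the same biased structure on the two boundary descents (from the covering node down to positions $i$ and $i{+}\ell{-}1$), for the same reason.
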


\begin{lemma}[{\myCite{Thm.~1}{bille17fingerprints}}]\label{lemFingerprintComputation}
	Given a string of length $n$ represented by an SLP of size $g$,
	we can construct a data structure supporting fingerprint queries in
\Oh{g} space and $\Oh{\lg n}$ deterministic query time.
This data structure can be constructed in \Oh{n \lg n} randomized time~(cf.~\myCite{Sect.~2.4}{gagie14lz77})
by using Karp, Miller and Rosenberg's~\cite{karp72rapid} renaming algorithm to make all fingerprints unique.
\end{lemma}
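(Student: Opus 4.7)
The plan is to augment the random-access data structure from \cref{lemSubstringExtraction} with Karp-Rabin fingerprints. Fix a prime modulus~$q$ and a base~$c$; for each of the $g$ variables~$X_i$ I would store $|\derive(X_i)|$, the power $c^{|\derive(X_i)|} \bmod q$, and the fingerprint $\phi(\derive(X_i)) \bmod q$, where $\phi(S) = \sum_{k=1}^{|S|} S[k]\, c^{|S|-k} \bmod q$. Together with the heavy-path decomposition already needed for random access, this fits in $\Oh{g}$ words.

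For a fingerprint query on $T[a..b]$, I would reuse the decomposition produced by the random-access machinery, which expresses $T[a..b]$ as a concatenation of $\Oh{\lg n}$ pieces, each piece being a contiguous portion of one variable's expansion along a heavy path. Per-heavy-path prefix fingerprint arrays (of total size $\Oh{g}$) let me read off each piece's fingerprint in $\Oh{1}$ time, and I then fold the pieces together using the composition rule $\phi(uv) \equiv \phi(u) \cdot c^{|v|} + \phi(v) \pmod{q}$, giving the claimed $\Oh{\lg n}$ worst-case query time.

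The randomized $\Oh{n \lg n}$ construction cost is the price of making the hash function collision-free, which is what turns an otherwise Monte Carlo query into a deterministic one. Following \myCite{Sect.~2.4}{gagie14lz77}, I would run Karp, Miller and Rosenberg's renaming algorithm on~$T$ to obtain canonical identifiers for every substring whose length is a power of two. I would then draw $(c,q)$ at random and verify that the induced Karp-Rabin hash is injective on this canonical universe; a standard union-bound argument yields a constant success probability per draw, so $\Oh{1}$ expected draws suffice and the total expected construction time remains $\Oh{n \lg n}$.

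The main obstacle is bridging the granularities of KMR and of the heavy-path decomposition: KMR natively certifies equality only on substrings of dyadic length, whereas the pieces produced at query time have arbitrary lengths. The reconciliation is that each such piece splits canonically into $\Oh{\lg n}$ dyadic blocks, so once the Karp-Rabin hash is certified injective on KMR names, it is automatically injective on the set of strings that any future query could compare via the composition rule. Once this certification has succeeded at construction time, every subsequent query returns the true fingerprint deterministically in $\Oh{\lg n}$ time.
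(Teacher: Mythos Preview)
The paper does not prove this lemma at all; it merely cites \cite{bille17fingerprints} for the query structure and \cite{gagie14lz77} for the randomized construction. Your first two paragraphs are a faithful outline of Bille et al.'s heavy-path fingerprint scheme and agree with what those citations deliver, so on that part there is nothing to compare against.

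Your fourth paragraph, however, contains a real gap. The assertion that ``once the Karp--Rabin hash is certified injective on KMR names, it is automatically injective on the set of strings that any future query could compare via the composition rule'' is false. Decomposing a length-$\ell$ substring into $\Oh{\lg n}$ dyadic blocks only shows that the \emph{tuple} of block fingerprints determines the substring; the single value $\phi$ you actually return is a fixed $c$-weighted linear combination of those block fingerprints, and distinct tuples can collapse to the same value modulo~$q$. Concretely, with $q=7$, $c=2$ and $T=\texttt{1234}$ (characters with numeric values $1,2,3,4$), the hash is injective on all dyadic lengths $1,2,4$, yet $\phi(\texttt{123})=\phi(\texttt{234})=4$, a non-dyadic collision. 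Hence verifying KMR-level injectivity does \emph{not} certify the collision-freeness the lemma promises. The construction referenced in \cite{gagie14lz77} uses a different verification (and, in the surrounding index, the z-fast trie still performs an explicit extract-and-compare so that any residual Monte Carlo risk becomes Las Vegas). You should replace the dyadic-cover heuristic by the actual argument from the cited source.
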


With \cref{lemSubstringExtraction} and \cref{lemFingerprintComputation} we have
$f_e(\ell) = \Oh{\ell + \lg n}$ and $f_h = \Oh{\lg n}$ in \cref{lemSuffixRangeQuery}, respectively, leading to:

\begin{corollary}\label{corSuffixRangeQuery}
	There is a data structure using \Oh{g} space such that,
	given a pattern $P[1..m]$ with $m \le n$, it can find all variables whose derived strings have one of $t$ selected suffix of $P$ as a prefix
	in $\Oh{m \lg \sigma / w + t(\lg n + \lg m) + \ell + \lg n}$ time.
\end{corollary}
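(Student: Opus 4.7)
The plan is to instantiate Lemma~\ref{lemSuffixRangeQuery} with the set $\mathcal{S} := \{\derive(X_i) : 1 \le i \le g\}$ of strings derived by the variables of the Lyndon SLP, using Lemmas~\ref{lemSubstringExtraction} and~\ref{lemFingerprintComputation} to supply the two required oracles. Since $|\mathcal{S}| = g$, the data structure of Lemma~\ref{lemSuffixRangeQuery} will itself occupy $\Oh{g}$ words. A ``variable whose derived string has $Q$ as a prefix'' is precisely a string in~$\mathcal{S}$ prefixed by~$Q$, so the query returned by Lemma~\ref{lemSuffixRangeQuery} gives exactly the object the corollary asks for.

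To provide the two oracles, I will use the observation that every $\derive(X_i)$ is a substring of~$T$: in the derivation tree of~$\LYN$, a subtree rooted at any occurrence of a node labeled $X_i$ spans a contiguous range of leaves, which gives a pair $(l_i, r_i)$ with $\derive(X_i) = T[l_i..r_i]$. A single bottom-up traversal of the Lyndon tree produces one such pair for each variable and stores them in a table of $\Oh{g}$ words. Using this table, any substring extraction or fingerprint query on $\derive(X_i)$ reduces, by shifting positions by $l_i - 1$, to the corresponding query on~$T$.

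Applying Lemma~\ref{lemSubstringExtraction} to the Lyndon SLP (which is an AG of size~$g$ representing~$T$) yields $f_e(\ell) = \Oh{\ell + \lg n}$ within $\Oh{g}$ space. Applying Lemma~\ref{lemFingerprintComputation} to the same Lyndon SLP yields $f_h = \Oh{\lg n}$, again within $\Oh{g}$ space. Plugging $f_e$ and $f_h$ into the bound of Lemma~\ref{lemSuffixRangeQuery} gives query time
\[
\Oh{m \lg \sigma / w + t(\lg n + \lg m) + m + \lg n},
\]
which matches the corollary's stated complexity (reading $\ell$ in the statement as the extraction length~$m$ of a suffix of~$P$).

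No step here presents a real obstacle: the corollary is essentially a routine composition of the three preceding lemmas. The only point that needs explicit justification is that the substring-extraction and fingerprint oracles required on the set~$\mathcal{S}$ can be realized by a single pair of oracles on~$T$; this is handled by the precomputed $(l_i, r_i)$ table described above, which also fits in the $\Oh{g}$ space budget.
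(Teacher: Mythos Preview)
Your proposal is correct and follows essentially the same approach as the paper: instantiate \cref{lemSuffixRangeQuery} with $f_e(\ell)=\Oh{\ell+\lg n}$ from \cref{lemSubstringExtraction} and $f_h=\Oh{\lg n}$ from \cref{lemFingerprintComputation}. You add the explicit $(l_i,r_i)$ table to reduce oracle calls on $\mathcal{S}$ to oracle calls on~$T$, a detail the paper leaves implicit; this is a sound and helpful clarification but not a different route.
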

\Cref{corSuffixRangeQuery} yields $\Oh{m \lg n}$ time for $t = m$, i.e., when we need to split the pattern at each position.
It yields $\Oh{m \lg \sigma / w + \lg m \lg n}$ time for $t = \lg m$,
i.e., the case for \myCite{Sect.~6.1}{christiansen18optimaltime} and for the Lyndon SLP thanks to \cref{lemSignificantSuffix}
(we assume that the pattern is not longer than the text).

We can retrieve the associated tuples of all primary occurrences by plugging the variables retrieved in \cref{corSuffixRangeQuery}
into a data structure for labeled binary relations~\cite{claudear:_self_index_gramm_based_compr}.

For that, we generate two list $\mathcal{L}$ and $\mathcal{L}^\mathrm{REV}$ of all variables~$X_1,\cdots,X_g$ of the grammar sorted lexicographically by their derived strings and the reverses of their derived strings, respectively.
Both lists allow us to answer a prefix (resp.\ suffix) query by returning a range of variables having the prefix (resp.\ suffix) in question.
The query is performed by the data structure described in \cref{lemSuffixRangeQuery} (with $\mathcal{S}$ being either $\mathcal{L}$ or $\mathcal{L}^\mathrm{REV}$).
Finally, we can plug the obtained ranges into the labeled binary relation data structure of Claude and Navarro~\cite{claudear:_self_index_gramm_based_compr}:

\begin{lemma}[{\myCite{Thm.~3.1}{claudear:_self_index_gramm_based_compr}}]\label{lemFindAssociatedTuple}
	Given two list $\mathcal{L}$ and $\mathcal{L}^\mathrm{REV}$ of variables sorted lexicographically by their expressions and its reversed strings,
	we can built a data structure of $\Oh{g}$ words of space in $\Oh{g \lg g}$ time for supporting the following query:
	Given a partition pair~$(\LeftPattern,\RightPattern)$ and ranges in~$\mathcal{L}$ and $\mathcal{L}^\mathrm{REV}$ of those variables
	whose derived strings have $\derive(\RightPattern)$ as a prefix and $\derive(\LeftPattern)$ as a suffix,
	this data structure can retrieve all associated tuples of~$(\LeftPattern,\RightPattern)$
	in $\Oh{(1 + {\occ[\LeftPattern,\RightPattern]^\prime}) \lg g}$ time,
	where $\occ[\LeftPattern,\RightPattern]^\prime$ denotes their number.
\end{lemma}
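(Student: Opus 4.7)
The plan is to reduce the query to two-dimensional labeled range reporting on a grid of $\Oh{g}$ points and then instantiate the labeled binary relation data structure of Claude and Navarro. Concretely, for every rule $X_i \rightarrow \LeftRule \RightRule$ of $\LYN{}$ I would place a single labeled point on the grid $[1..g]\times[1..g]$: its first coordinate is the $\mathcal{L}^{\mathrm{REV}}$-rank of $\LeftRule$, its second coordinate is the $\mathcal{L}$-rank of $\RightRule$, and its label is $X_i$. Because $\LYN{}$ is an SLP in Chomsky normal form, there are at most $g$ such binary rules, so the grid carries $\Oh{g}$ labeled points and occupies $\Oh{g}$ words of space.

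The key observation justifying this encoding is that the associated tuples of a partition pair $(\LeftPattern, \RightPattern)$ are in bijection with the labeled points lying in a single axis-aligned rectangle. By the definition of the two input ranges, a variable $X_L$ lies in the $\mathcal{L}^{\mathrm{REV}}$-range exactly when $\derive(X_L) = \alpha\LeftPattern$ for some (possibly empty) $\alpha$, and a variable $X_R$ lies in the $\mathcal{L}$-range exactly when $\derive(X_R) = \RightPattern\beta$ for some (possibly empty) $\beta$. Hence a rule $X_i \rightarrow X_L X_R$ contributes a point inside the query rectangle if and only if $(X_i, \alpha, \beta)$ is an associated tuple of $(\LeftPattern, \RightPattern)$, and reading off the stored label at each reported point yields the tuple directly.

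It remains to supply a data structure on this grid with the claimed bounds. A standard wavelet tree built on the sequence of second coordinates sorted by first coordinate, equipped with a parallel array of labels, uses $\Oh{g}$ words, can be constructed in $\Oh{g \lg g}$ deterministic time, and reports all $k$ points inside an axis-aligned rectangle in $\Oh{(1+k)\lg g}$ time; this is exactly what Thm.~3.1 of Claude and Navarro packages as a labeled binary relation. Instantiating it with $k = \occ[\LeftPattern,\RightPattern]^\prime$ gives the stated query time. The main obstacle, such as it is, is purely a verification task: checking that the rectangle-to-tuple correspondence is tight, which follows by unfolding the definition of the two input ranges and of an associated tuple, together with the fact that $X_L$ and $X_R$ uniquely determine $\alpha$ and $\beta$ once $X_i \rightarrow X_L X_R$ is fixed. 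No new algorithmic ingredient beyond the wavelet tree is required, since a Lyndon SLP, once its Lyndon structure is forgotten, fits the black-box interface used by Claude and Navarro.
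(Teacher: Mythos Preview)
The paper does not give its own proof of this lemma; it is stated as a black-box citation of Thm.~3.1 of Claude and Navarro. Your proposal is correct and faithfully reconstructs the construction behind that citation---encoding each binary rule $X_i\to\LeftRule\RightRule$ as a labeled point on a $[1..g]\times[1..g]$ grid and answering the query by wavelet-tree-based orthogonal range reporting---which is precisely the labeled binary relation data structure the paper invokes.
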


The time complexity of \cref{corSuffixRangeQuery} and \cref{lemSuffixRangeQuery} is based on the assumption that
we have (static) z-fast tries~\cite{belazzougui10prefix} built on the lists~$\mathcal{L}$ and $\mathcal{L}^\mathrm{REV}$,\footnote{We use again the derived string or, respectively, the reverse of the derived string of each non-terminal in one of the lists as its respective keyword to insert into the trie.}
which we can build in \Oh{g} expected time and space~(\myCite{Sect.~6.6 (3)}{christiansen18optimaltime}).

Since there are \Oh{\lg m} partition pairs according to \cref{corPartitionPairs},
applying \cref{lemFindAssociatedTuple} over all \Oh{\lg m} partition pairs yields $\Oh{\lg m \lg g + {\occ[prim]} \lg g}$ time,
where $\occ[prim]$ denotes the number of all primary occurrences.

\begin{corollary} \label{corFindAssociatedTuple}
	We can find the primary occurrences of a pattern~$P$ in
	\[
	\underbrace{\Oh{m}}_{\cref{lem:LyndonFactorizationLinearTime}} + \underbrace{\Oh{m \lg \sigma / w + \lg m \lg n}}_{\cref{corSuffixRangeQuery}} + \underbrace{\Oh{\lg m \lg g + \occ \lg g}}_{\cref{lemFindAssociatedTuple}}  =
		\]
			$= \Oh{m + \lg m \lg n + \occ \lg g}$ time.
\end{corollary}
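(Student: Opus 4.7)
The plan is to combine the three ingredients assembled above in the natural order. First, I compute the Lyndon factorization of $P$, together with the Lyndon factorization of $P[2..m]$ needed for the case $\alpha = \varepsilon$ treated in \cref{secAlphaEmpty}, in $\Oh{m}$ time by \cref{lem:LyndonFactorizationLinearTime}. From these factorizations I read off the significant suffixes directly as their tails; by \cref{lem:LambdaUpperBound} there are only $\Oh{\lg m}$ of each kind. Applying \cref{lemSignificantSuffix} for associated tuples with $|\alpha| > 0$ and the reduction of \cref{secAlphaEmpty} for $\alpha = \varepsilon$, every candidate partition pair of $P$ has its right half $\RightPattern$ equal to a significant suffix of $P$ or to a significant suffix of $P[2..m]$ prefixed by $P[1]$, which yields the $\Oh{\lg m}$ candidate partition pairs promised by \cref{corPartitionPairs}.

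Second, for each candidate $(\LeftPattern, \RightPattern)$ I must locate all variables whose derived strings have $\RightPattern$ as a prefix and, separately, all variables whose derived strings have $\LeftPattern$ as a suffix. Since both the collection of $\LeftPattern$'s and that of $\RightPattern$'s have size $\Oh{\lg m}$, a single invocation of \cref{corSuffixRangeQuery} with $t = \Oh{\lg m}$ on each of the two z-fast tries built on $\mathcal{L}$ and $\mathcal{L}^{\mathrm{REV}}$ returns all requisite ranges in $\Oh{m \lg \sigma / w + \lg m \lg n}$ time; under the word-RAM assumption $\sigma = n^{\Oh{1}}$ the first summand collapses to $\Oh{m}$.

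Third, for every partition pair I feed the two obtained ranges into the labeled binary relation of \cref{lemFindAssociatedTuple}, which reports all associated tuples and thereby the primary occurrences they encode in $\Oh{(1 + \occ[\LeftPattern,\RightPattern]^\prime) \lg g}$ time. Because distinct partition pairs of $P$ correspond to different grammar splits and hence contribute disjoint sets of primary occurrences, summing over the $\Oh{\lg m}$ partition pairs gives $\Oh{\lg m \lg g + \occ \lg g}$. Adding the three contributions yields the claimed total $\Oh{m + \lg m \lg n + \occ \lg g}$. The only subtlety worth double-checking is this last disjointness argument, together with the verification that the $\alpha = \varepsilon$ case truly reduces to a partition-pair computation on $P[2..m]$ rather than demanding an independent analysis --- this is precisely what \cref{secAlphaEmpty} establishes, so no genuine obstacle remains beyond bookkeeping.
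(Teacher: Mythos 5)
Your proposal is correct and follows the same route the paper takes: compute the Lyndon factorizations in $\Oh{m}$ time (\cref{lem:LyndonFactorizationLinearTime}), restrict to the $\Oh{\lg m}$ significant-suffix splits via \cref{lemSignificantSuffix}, \cref{secAlphaEmpty}, and \cref{corPartitionPairs}, obtain the prefix/suffix ranges for these splits in one pass with \cref{corSuffixRangeQuery} ($t = \Oh{\lg m}$), and then sum the labeled-binary-relation queries of \cref{lemFindAssociatedTuple} over the $\Oh{\lg m}$ pairs. Your explicit remarks that each primary occurrence corresponds to exactly one partition pair (so the $\occ[\LeftPattern,\RightPattern]'$ terms sum to at most $\occ$) and that $m\lg\sigma/w = \Oh{m}$ under $\sigma = n^{\Oh{1}}$ are both correct and are left implicit in the paper's presentation.
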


Finally, we use the derivation tree to find the remaining (secondary) occurrences of the pattern:

\subsection{Search for secondary occurrences}\label{secSecondaryOcc}
We follow Claude and Navarro~\cite{DBLP:conf/spire/ClaudeN12a} improving the search of the secondary occurrences in \Cite{claudear:_self_index_gramm_based_compr}
by applying reduction rule R-1 to enforce C-1
(see Sect.~\ref{secGrammarIrreducibility}).
The resulting admissible grammar~$\CFG$ is no longer an SLP in general.
Since we only remove variables with a single occurrence,
the size of $\CFG$ is \Oh{g}.
Consequently, we can store both \CFG{} \emph{and} \SLP{} in \Oh{g} space.

\begin{lemma}[{\myCite{Sect.~5.2}{DBLP:conf/spire/ClaudeN12a}}] \label{lemRangeQuery}
Given the associated tuples of all partition pairs, we can find all $\occ$ occurrences of~$P$ in~$T$
	with \CFG{} in \Oh{\occ \lg g} time.
\end{lemma}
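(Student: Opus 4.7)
The plan is to adapt the secondary-occurrence search of Claude and Navarro~\cite{DBLP:conf/spire/ClaudeN12a}. An associated tuple produced by \cref{lemFindAssociatedTuple} identifies a primary occurrence as a pair~$(X_i, o)$ consisting of a rule $X_i \to \LeftRule \RightRule$ and an internal offset~$o$ at which~$P$ lies inside~$\derive(X_i)$ crossing the boundary between $\derive(\LeftRule)$ and $\derive(\RightRule)$. Every text occurrence of~$P$ in~$T$ has a unique such primary ancestor, namely the lowest derivation-tree node whose derived string contains the occurrence. Hence, to complete the locate query, it suffices, for every primary occurrence $(X_i, o)$, to enumerate all derivation-tree nodes labelled~$X_i$ and translate each into its position in~$T$.

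As preprocessing I would store for every variable~$X$ the list of \emph{parent occurrences} $(X_j, k)$ such that~$X$ is the $k$-th symbol in the right-hand side of the rule for~$X_j$, together with the precomputed internal offset $|\derive(X_j[1..k-1])|$. The total size of these lists equals the sum of right-hand-side lengths of~$\CFG$, which is $\Oh{g}$. Following \cite{DBLP:conf/spire/ClaudeN12a}, the structure can be realised through bit vectors equipped with rank/select support so that each parent occurrence is accessed in $\Oh{\lg g}$ time.

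The locate step then runs a depth-first traversal starting from each primary occurrence. In a state $(X, o)$: if $X$ is the start symbol~$X_g$, report~$o$ as an occurrence of~$P$ in~$T$; otherwise, for each parent occurrence $(X_j, k)$ of~$X$, recurse with state $(X_j, o + |\derive(X_j[1..k-1])|)$. Every leaf of the resulting recursion tree corresponds bijectively to a text occurrence of~$P$ below the fixed primary ancestor, so summing over all primary occurrences reports each text occurrence exactly once.

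For the time bound I would invoke condition C-1 (rule utility), which holds for~$\CFG$ thanks to the application of reduction rule R-1: every non-start variable has at least two parent occurrences, and therefore every internal node of the recursion tree has at least two children. Hence the tree has $\Oh{\occ}$ nodes in total, and at a cost of $\Oh{\lg g}$ per node the entire secondary-occurrence phase takes $\Oh{\occ \lg g}$ time. The main obstacle is to justify that $\Oh{\lg g}$ per step suffices even though~$\CFG$ is no longer an SLP after the reduction; this is precisely what the compressed grammar-tree representation of Claude and Navarro is designed for, preserving~$\Oh{g}$ space while supporting upward navigation through the DAG in logarithmic time.
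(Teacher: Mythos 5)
Your reconstruction matches what the paper relies on: the lemma is cited directly from Claude and Navarro (Sect.~5.2 of \cite{DBLP:conf/spire/ClaudeN12a}), and the only argument the paper itself supplies is the setup in the surrounding text, namely that applying reduction rule R-1 to enforce condition C-1 guarantees every non-start variable of \CFG{} occurs at least twice in the right-hand sides, which is exactly the branching property you invoke to bound the upward-traversal recursion tree by $\Oh{\occ}$ nodes at $\Oh{\lg g}$ per step. The one point you gloss over (and which the paper also leaves implicit by storing both grammars) is that the associated tuples reference variables of the original Lyndon SLP, some of which R-1 may have inlined away, so a precomputed $\Oh{g}$-size map from each removed variable to its surviving ancestor in \CFG{} with the corresponding offset shift is needed before the DFS can start; with that mapping in place your argument is the same as the cited one.
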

Remembering that we split the analysis
of an associated tuple in the cases $|\alpha| = 0$ (\cref{secAlphaEmpty}) and $|\alpha| > 0$ (\cref{corFindAssociatedTuple}),
we observe that the time complexity of the latter case dominates.
Combining this time with \cref{lemRangeQuery} yields the time complexity for answering $\Locate(P)$ with the Lyndon SLP:

\begin{theorem}
	\label{th:collision_free}
	Given the Lyndon SLP of~$T$,
	there is a data structure using $\Oh{g}$ words that can be constructed in $\Oh{n \lg n}$ expected time,
	supporting $\Locate(P)$ in $\Oh{m + \lg m \lg n + \occ \lg g}$ time for a pattern~$P$ of length~$m$.
\end{theorem}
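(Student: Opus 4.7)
The plan is to derive the theorem as a mostly mechanical assembly of the building blocks developed in \cref{secSelfIndex}. For construction, I would first build the Lyndon SLP itself in $\Oh{n \lg n}$ expected time via the hashing-based algorithm of \cref{secConstruction}. On top of it, I would equip Bille et al.'s random-access structure (\cref{lemSubstringExtraction}) and the SLP fingerprint structure (\cref{lemFingerprintComputation}), both of size $\Oh{g}$, the latter costing $\Oh{n \lg n}$ randomized time. I would then sort the variables by their derived strings and by the reverses of their derived strings into two lists $\mathcal{L}$ and $\mathcal{L}^{\mathrm{REV}}$, attach a z-fast trie to each in $\Oh{g}$ expected time, and build the labeled binary relation of \cref{lemFindAssociatedTuple} in $\Oh{g \lg g}$ time. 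Finally, I would apply reduction rule R-1 to obtain the admissible grammar $\CFG$ of size $\Oh{g}$ used for the secondary-occurrence search in \cref{lemRangeQuery}. All ingredients fit in $\Oh{g}$ words, and the construction bottleneck is the $\Oh{n \lg n}$ spent on fingerprints (and on the Lyndon SLP itself).

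For answering $\Locate(P)$, I would proceed in two phases. In the primary phase, compute the Lyndon factorizations of $P$ and of $P[2..m]$ in $\Oh{m}$ time via \cref{lem:LyndonFactorizationLinearTime}, extract their $\Oh{\lg m}$ significant suffixes using \cref{lem:LambdaUpperBound}, and feed them as the suffixes $Q_1, \ldots, Q_t$ (with $t = \Oh{\lg m}$) into \cref{corSuffixRangeQuery}; since $m \le n$, this costs $\Oh{m + \lg m \lg n}$ time. Next, for each of the $\Oh{\lg m}$ resulting partition pairs I would query the labeled binary relation of \cref{lemFindAssociatedTuple} once, accumulating $\Oh{\lg m \lg g + \occ[prim] \lg g}$ time and yielding all primary occurrences, as summarized in \cref{corFindAssociatedTuple}. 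In the secondary phase, I would feed the primary occurrences into $\CFG$ and invoke \cref{lemRangeQuery} to expand them in $\Oh{\occ \lg g}$ time.

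Summing the contributions and using $\lg g \le \lg n$ yields $\Oh{m + \lg m \lg n + \occ \lg g}$, as claimed. The main subtlety that needs careful justification is that the $\Oh{\lg m}$ suffixes actually cover every partition pair: by \cref{corPartitionPairs} each pair falls into either the $|\alpha| > 0$ case, where \cref{lemSignificantSuffix} forces $\RightPattern$ to be a significant suffix of $P$, or the $|\alpha| = 0$ case treated in \cref{secAlphaEmpty}, where $\RightPattern$ is either $P[2..m]$ or a significant suffix of $P[2..m]$. Hence the union of significant suffixes of $P$ and of $P[2..m]$, still $\Oh{\lg m}$ many by \cref{lem:LambdaUpperBound}, is a sufficient input for \cref{corSuffixRangeQuery}. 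Everything else is bookkeeping, but this step is where the entire time bound depends on the structural results of \cref{lemLyndonWordRuns} and \cref{lemSignificantSuffix}, so it is the part I expect to write out most explicitly.
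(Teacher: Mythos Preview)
Your proposal is correct and follows essentially the same assembly of building blocks as the paper: primary occurrences via \cref{corFindAssociatedTuple} (built from \cref{lem:LyndonFactorizationLinearTime}, \cref{corSuffixRangeQuery}, \cref{lemFindAssociatedTuple}, with \cref{corPartitionPairs} bounding the number of partition pairs), secondary occurrences via \cref{lemRangeQuery}, and construction dominated by the fingerprint structure of \cref{lemFingerprintComputation}. Your write-up is in fact more explicit than the paper's own terse justification; the only minor slip is that the hashing-based Lyndon SLP construction in \cref{secConstruction} runs in $\Oh{n}$ expected time rather than $\Oh{n \lg n}$, but this does not affect the overall bound.
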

Note that the \Oh{n \lg n} expected construction time is due to the data structure described in \cref{lemFingerprintComputation}.

\section{Conclusion}

We introduced a new class of SLPs, named the \emph{Lyndon SLP}, and
proposed a self-index structure of $\Oh{g}$ words of space,
which can be built from an input string $T$ in $\Oh{n \lg n}$ expected time,
where $n$ is the length of~$T$ and
$g$ is the size of the Lyndon SLP for $T$.
By exploiting combinatorial properties on Lyndon SLPs,
we showed that
$\Locate(P)$ can be computed
in $\Oh{m + \lg m \lg n + \occ \lg g}$ time for a pattern~$P$ of length~$m$, where $\occ$ is the number of occurrences of $P$.
This is better than the
${\Oh{m^2 \lg \lg_{\ag} n + (m + \occ)\lg\ag}}$
query time of the SLP-index by Claude and Navarro~\cite{DBLP:conf/spire/ClaudeN12a} (cf.~\Tabref{tabConstruction}),
which works for a \emph{general} admissible grammar of size~$\ag$.

We have not implemented the proposed self-index structure,
and comparing it with other self-index implementations
such as
    the FM index~\cite{ferragina08compressed},
    the LZ index~\cite{arroyuelo11lz78},
    the ESP index~\cite{DBLP:conf/wea/TakabatakeTS14}, or
    the LZ-end index~\cite{kreft13lzend}
will be a future work.
Also, we want to speed up the query time to $\Oh{m \lg \sigma / w + \lg m \lg n + \occ \lg g}$ by applying broadword techniques for determining the Lyndon factors of the pattern~$P$ (cf.~\cref{corFindAssociatedTuple}),
where
$\sigma$ is the alphabet size and $w$ is the computer word length.

\bibliography{ref}
\bibliographystyle{abbrv}
\end{document}